\tikzset{
	every picture/.style = {
		thick,
		>=stealth',
		node distance = 1.5em and 3em,
	}
	,
	cross line/.style = {
		preaction = {
			draw=white,
			-,
			line width=4pt
		}
	}
	,
	state/.style = {
		rectangle,
		rounded corners = 5pt,
		font = \footnotesize,
		draw,
		minimum width = 1em,
		minimum height = 1em
	}
	, 
	label-state/.style = {
		sloped,
		font = \scriptsize,
		label distance = -2pt
	}
	, 
	label-edge/.style = {
		font = \scriptsize,    
		label distance = -2pt
	}
}
\newtheorem{theorem}{Theorem}
\newtheorem{proposition}[theorem]{Proposition}
\newtheorem{lemma}[theorem]{Lemma}
\newtheorem{definition}[theorem]{Definition}
\theoremstyle{remark}
\newtheorem{example}{Example}
\newcommand{\zerodisplayskips}{%
  \setlength{\abovedisplayskip}{5pt}%
  \setlength{\belowdisplayskip}{5pt}%
  \setlength{\abovedisplayshortskip}{5pt}%
  \setlength{\belowdisplayshortskip}{5pt}}
\appto{\normalsize}{\zerodisplayskips}
\appto{\small}{\zerodisplayskips}
\appto{\footnotesize}{\zerodisplayskips}
\Crefname{definition}{Def.}{Defs.}
\Crefname{figure}{Fig.}{Figs.}
\Crefname{theorem}{Thm.}{Thms.}
\Crefname{section}{Sec.}{Secs.}
\Crefname{appendix}{Appx.}{Apps.}
\Crefname{remark}{Remark}{Remarks}
\newcommand{\midd}{\mathrel{\,\mid\,}}
\newcommand{\<}{\langle}
\renewcommand{\>}{\rangle}
\newcommand{\tup}[1]{\langle #1 \rangle}
\DeclarePairedDelimiterX\setof[2]{\{}{\}}{\,#1 \;\delimsize\vert\; #2\,}
\newcommand{\hxpd}{\text{\rm HXPath}_{\rm D}}
\newcommand{\hylo}{\mathscr{H}\xspace}
\newcommand{\rr}[1]{{\color{WildStrawberry}#1}}
\newcommand{\cc}[1]{{\color{ProcessBlue}#1}}
\newcommand{\pp}[1]{{\color{Mulberry}#1}}
\newcommand{\taag}[3]{@_{#2}\tup{#1}{#3}}
\newcommand{\cmprif}[1]{
    \ifthenelse{\equal{#1}{=}}
    {
      =_{\compc}
    }
    {
      \ifthenelse{\equal{#1}{\neq}}
      {
        \neq_{\compc}
      }
      {
        \mathrel{#1}
      }
    }
}
\newcommand{\tagg}[3]{\tup{{#2}{:} \cmprif{#1} {#3}{:}}}
\newcommand{\hilbert}{\mathbf{H}}
\newcommand{\gentzen}{\mathbf{G}}
\newcommand{\rgentzen}{\gentzen'}
\newcommand{\tgentzen}{\gentzen_{\hylo(@)}}
\newcommand{\liff}{\leftrightarrow}
\newcommand{\prop}{\mathsf{Prop}}
\newcommand{\nom}{\mathsf{Nom}}
\newcommand{\mo}{\mathsf{Mod}}
\newcommand{\cmp}{\mathsf{Cmp}}
\newcommandx{\dow}[1][1={a}]{\xspace\mathsf{#1}\xspace}
\newcommand{\dowa}{\xspace\mathsf{a}\xspace}
\newcommand{\compc}{\xspace\mathsf{c}\xspace}
\newcommand{\cmpr}{\mathrel{\blacktriangle}}
\newcommand{\cmpd}{\mathrel{\blacktriangledown}}
\newcommandx{\amodel}[1][1={M}]{\mathfrak{#1}}
\newcommand{\nodes}{\mathrm{N}}
\newcommand{\node}{n}
\newcommand{\R}{\mathrm{R}}
\newcommand{\deq}{\approx}
\newcommand{\dneq}{\not\approx}
\newcommand{\V}{\mathrm{V}}
\renewcommand{\iff}{\mathit{iff}}
\newcommand{\Rho}{\mathrm{P}}
\newcommand{\inv}[1]{#1^{\text{-}1}}
\tikzset{
  modal/.style={>=stealth',shorten >=1pt,shorten <=1pt,auto},
  world/.style={circle,draw,minimum size=1.3cm},
  point/.style={circle,draw,fill=black,inner sep=0.2mm},
  reflexive/.style={->,in=120,out=60,loop,looseness=#1},
  reflexive/.default={5},
  reflexive point/.style={->,in=135,out=45,loop,looseness=#1},
  reflexive point/.default={25},
  label-edge/.style = {font = \footnotesize},
}
\tikzset{
  reflexive above/.style={->,loop,in=120,out=60,looseness=#1},
  reflexive above/.default={7},
  reflexive below/.style={->,loop,in=240,out=300,looseness=#1},
  reflexive below/.default={7},
  reflexive left/.style={->,loop,in=150,out=210,looseness=#1},
  reflexive left/.default={7},
  reflexive right/.style={->,loop,in=30,out=330,looseness=#1},
  reflexive right/.default={7}
}
\newcommandx{\aderivation}[1][1={D}]{\mathscr{#1}}
\DeclareMathOperator{\size}{size}
\newcommand{\arule}{\mathrm{P}}
\newcommand{\thescalefactor}{.83}
\title{Sequent Calculi for Data-Aware Modal Logics}
\author{Carlos Areces
	\institute{Universidad
	Nacional de C{\'o}rdoba} 
	\institute{and CONICET, Argentina}
	\email{carlos.areces@unc.edu.ar}
\and
Valentin Cassano
	\institute{Universidad
	Nacional de R\'io Cuarto}
    \institute{and CONICET, Argentina}
	\email{valentin@dc.exa.unrc.edu.ar}
\and	
Danae Dutto
	\institute{Universidad Nacional de
	C{\'o}rdoba}
	\institute{and CONICET, Argentina} 
	\email{ddutto@dc.exa.unrc.edu.ar}
\and
Raul Fervari
	\institute{Universidad
	Nacional de C{\'o}rdoba} 
	\institute{and CONICET, Argentina} 
	\email{rfervari@unc.edu.ar}
}
\begin{document}
\maketitle

\begin{abstract}
	Data-aware modal logics offer a powerful formalism for reasoning about semi-structured queries in languages such as DataGL, XPath, and GQL.
	In brief, these logics can be viewed as modal systems capable of expressing both reachability statements and data-aware properties, such as value comparisons.
	One particularly expressive logic in this landscape is $\hxpd$, a hybrid modal logic that captures not only the navigational core of XPath but also data comparisons, node labels (keys), and key-based navigation operators.
	While previous work on $\hxpd$ has primarily focused on its model-theoretic properties, in this paper we approach $\hxpd$ from a proof-theoretic perspective.
	Concretely, we present a sound and complete Gentzen-style sequent calculus for $\hxpd$. Moreover, we show all rules in this calculus are invertible, and that it enjoys cut elimination.
	Our work contributes to the proof-theoretic foundations of data-aware modal logics, and enables a deeper logical analysis of query languages over graph-structured data.
	Moreover, our results lay the groundwork for extending proof-theoretic techniques to a broader class of modal systems.
\end{abstract}

\section{Introduction}
\label{sec:intro}

Semi-structured data organization~\cite{Buneman97} is reemerging as a flexible paradigm for representing and storing information, offering a contrast to the rigid rows-and-columns model of traditional relational databases.
This approach is exemplified by \emph{graph databases}, where information is represented and stored in a semi-structured form as \emph{data graphs}, in which nodes and edges are  labeled with values that capture both structure and data content~\cite{Robinson:2013}.
Graph databases are widely adopted in domains such as the web, social networks, fraud detection, and recommendation engines, and are implemented in tools like \href{https://neo4j.com/}{Neo4j} and \href{http://aws.amazon.com/neptune/}{Amazon Neptune}. 
To support such range of applications, graph databases rely on expressive query languages designed to navigate and extract information from their underlying structures.
However, unlike relational databases, where SQL serves as the \emph{de facto} standard, query languages for graph databases are still evolving.
Recent efforts, as discussed in~\cite{FrancisGGLMMMPR23,FrancisGGLMMMPR23b}, are converging toward the Graph Query Language (GQL) Standard~\cite{ISO39075}, which aims to unify existing industrial approaches.
GQL builds on its academic predecessor G-CORE~\cite{AnglesABBFGLPPS18} and draws conceptual inspiration from Regular Path Queries~\cite{LV12} and the XML Path Language (XPath)~\cite{LibkinMV16}, offering expressive capabilities for navigating graph structures and performing equality and inequality tests on data.
This connection highlights the importance of logical foundations in understanding and advancing modern query languages for graph data.

In this article, we focus on XPath, approaching it from a logical perspective and building on prior work that formalizes fragments of this language as modal logics.
Below, we outline a brief timeline highlighting key developments in this direction.

Initial foundational results~\cite{BK08,CateM09}, showed that the navigational fragment of XPath---known as Core-XPath~\cite{GKP05}---is strictly less expressive than Propositional Dynamic Logic (PDL) over trees.
This set the stage for subsequent work, ending with a complete axiomatization for Core-XPath in~\cite{CateM09,cateLM10}. 
As the study of XPath progressed, attention turned to features that go beyond pure navigation.
Querying data graphs often requires the ability to reason about and compare data values---a limitation of purely navigational logics.
How to properly represent and manipulate data within a modal setting gave rise to the development of \emph{data-aware modal logics}: extensions of modal logic that incorporate operators for handling data alongside navigation.
These logics are tailored to capturing the interplay between structure and data in graph-based systems, integrating mechanisms such as data comparison, binding, and quantification within the modal framework.
This includes the development of expressive yet decidable languages, the design of automata-theoretic characterizations, and the identification of suitable semantics that balance expressive power with computational tractability (see e.g.,~\cite{FigPhD}).
The result is a family of expressive formalisms capable of modeling sophisticated queries over data-rich systems.

These efforts have opened up new avenues in both theoretical studies and practical applications, particularly in areas such as verification, knowledge representation, and query languages for graph databases.
For instance,~\cite{BojanczykMSS09} introduces Core-Data-XPath as an extension of Core-XPath with equality and inequality comparisons.
The satisfiability problem and complexity aspects of fragments of these logics are explored in~\cite{FigueiraS11,Figueira12ACM,LibkinMV16,FigueiraD:2018}, while the model theory and bisimulation algorithms for this kind of logics are studied in~\cite{ADF14,ICDT14,ICDT14Jair,KR16,ADF17IC,AbriolaBFF18}.
Axiomatic systems extending those from \cite{cateLM10} are presented in~\cite{AbriolaDFF17,AbriolaFG24}.
This theoretical groundwork led to concrete applications. For example, \cite{GKP05} discusses newly proposed algorithms with existing XPath processors, while the satisfiability methods in~\cite{FigueiraD:2018} can be used to check consistency in XML documents. Furthermore, the bisimulation algorithms in~\cite{ICDT14Jair,KR16} have applications in database minimization, while~\cite{BaeldeLS19} examines how the complexity of fragments impacts real-world applications.

The work in~\cite{ArecesF21} introduces a novel extension to data-aware modal logics by incorporating \emph{keys}---or labels---for nodes, and \emph{jump-to-key} operations.
This results in the logic we refer to as $\hxpd$, which extends Core-Data-XPath with expressive constructs from Hybrid Logic. In particular, nominals and the satisfiability operator $@$, well-studied in Hybrid Logics~\cite{arec:hybr05b}, provide the means to refer directly to specific nodes, and to ``navigate'' to specific nodes.
This addition enables $\hxpd$ to express properties and navigation patterns not previously captured in earlier formalisms.
Furthermore, the hybrid features of $\hxpd$ support a novel axiomatization, with completeness established using Henkin-style models.

In this article, we contribute to the logical study of data-aware modal logics from a \emph{proof-theoretic} perspective by developing a sequent calculus for $\hxpd$.
We establish the completeness of our calculus by leveraging the completeness results in~\cite{ArecesF21}.
We also show that our calculus enjoys cut-elimination and invertibility of rules.
Furthermore, we relate our work to research on labeled modal sequent calculi with equality~\cite{Negri14} and the proof theory of hybrid logic~\cite{Torben2011}.
Developing a proof theory for modal logics is notoriously difficult---see, e.g., the discussion in~\cite{Negri2011-NEGPTF}.
Several techniques, such as display calculi~\cite{Wansing2002}, hypersequents~\cite{Avron1996}, deep inference~\cite{StewartS04}, labeled systems~\cite{Negri2005}, and hybridization~\cite{Torben2011}, have been explored to establish a solid proof-theoretic foundation for various modal logics.
Our results suggest that labeling and hybridization can also support a well-grounded proof theory for data-aware modal logics.


\smallskip
\noindent \textbf{Outline.}
\Cref{sec:xpath} introduces the logic $\hxpd$.
\Cref{sec:calculus} presents the sequent calculus $\gentzen$ for $\hxpd$.
Key properties of $\gentzen$---soundness, completeness, and invertibility of rules---are addressed in~\Cref{sec:soundness,sec:completeness,sec:invertibility}.
Cut elimination is established in~\Cref{sec:cut}.
\Cref{sec:torben} demonstrates a correspondence between $\gentzen$ and the sequent calculus for Basic Hybrid Logic from~\cite{Torben2011}.
\Cref{sec:final} concludes with final remarks and directions for future work.
Details and omitted proofs are available in~\cite{arxiv}.

\section{Hybrid XPath with Data}
\label{sec:xpath}

We begin by presenting the formal syntax and semantics of $\hxpd$---further details are found in~\cite{ArecesF21}.
In our presentation, we assume $\prop$, $\nom$, $\mo$, and $\cmp$ be pairwise disjoint sets of symbols for propositions, nominals, modalities, and data comparisons, respectively.
Moreover, we assume that $\mo$ and $\cmp$ are finite, while $\prop$ and $\nom$ are countably infinite.

\begin{definition}\label[definition]{def:lang}
    The language of $\hxpd$ has \emph{path} expressions (denoted $\alpha$, $\beta$, \dots) and \emph{node} expressions (denoted $\varphi$, $\psi$, \dots), mutually defined by the grammar:
    \begin{align*}
        \alpha, \beta \coloneqq\ &
        \dowa \midd
        i{:} \midd
        \varphi? \midd
        \alpha\beta
        \\
        \varphi, \psi \coloneqq\ &
            p \midd
            i \midd
            \bot \midd
            \varphi \to \psi \midd
            @_i\varphi \midd
            \tup{\dowa}\varphi \midd
            \<\alpha =_{\compc} \beta\> \midd \<\alpha \neq_{\compc} \beta\>,
    \end{align*}
    where 
    $p \in \prop$,
    $i \in \nom$,
    $\dowa \in \mo$, 
    and 
    $\compc\in\cmp$.\footnote{%
        Unlike~\cite{ArecesF21}, we treat $@_i\varphi$ and $\langle\dowa\rangle\varphi$ as primitive. This choice simplifies the formulation of the sequent calculus in~\Cref{sec:calculus}.
    }
%
%
    For path expressions, we use $\epsilon \coloneqq \top?$ to indicate the \emph{empty}~path.
    For node expressions, we use standard abbreviations:
        $\top \coloneqq \bot \to \bot$,
        $\lnot \varphi \coloneqq \varphi \to \bot$,
        $\varphi \lor \psi \coloneqq \lnot\varphi \to \psi$,
        $\varphi \land \psi \coloneqq \lnot(\varphi \to \lnot\psi)$, and
        $\varphi \liff \psi \coloneqq (\varphi \to \psi) \land (\psi \to \varphi)$.
    We also abbreviate:
        $\tup{j{:}}\varphi \coloneqq @_j\varphi$,
        $\tup{\psi?}\varphi \coloneqq \psi \land \varphi$,
        $\tup{\alpha\beta}\varphi \coloneqq \tup{\alpha}\tup{\beta}\varphi$, and
        $[\alpha]\varphi \coloneqq \lnot\tup{\alpha}\lnot\varphi$.
    These abbreviations complete all cases to allow arbitrary paths $\alpha$ inside a unary modality $\tup{\alpha}$. 
    Finally, we abbreviate
        $[\alpha \cmpr \beta] \coloneqq \lnot\tup{\alpha \cmpd \beta}$.
    In the last abbreviation, we use $\cmpr$ when there is no need to distinguish $=_{\compc}$ and $\neq_{\compc}$, and use ${\cmpd}$ to indicate ${\neq_{\compc}}$ if ${\cmpr}$ is ${=_{\compc}}$, and to indicate ${=_{\compc}}$ if ${\cmpr}$ is ${\neq_{\compc}}$.
\end{definition}

Path and node expressions are interpreted over hybrid data models.

\begin{definition}\label[definition]{def:models}
    A \emph{(hybrid data) model} is a tuple 
$\amodel = \tup{  
        \nodes,
        \{\R_{\dowa}\}_{\dowa \in \mo},
        \{\deq_{\compc}\}_{\compc \in \cmp}, 
        g,
        \V}$,
    where
        $\nodes$ is a non-empty set of nodes;
        each $\R_{\dowa}$ is a (binary) \emph{accessibility relation} on $\nodes$;
        each $\deq_{\compc}$ is an equivalence relation on $\nodes$, called a \emph{comparison};
        $g : \nom \to \nodes$ is a \emph{nominal assignment}; and 
        $\V : \prop \to 2^{\nodes}$ is a \emph{valuation}. 
\end{definition}

The satisfiability relation for path and node expressions is as follows.

\begin{definition}\label[definition]{def:semantics} 
    Let $\amodel=\tup{\nodes, \{\R_{\dowa}\}_{\dowa \in \mo},\{\deq_{\compc}\}_{\compc \in \cmp}, g,\V}$ be a model, and let ${\{\node, \node'\} \subseteq \nodes}$. 
    The satisfiability relation $\Vdash$ is given by the following conditions: 
        \[
        \begin{array}{l @{~~}c@{~~} l}
            \amodel,\node,\node' \Vdash \dowa
            & \iff & \node \R_{\dowa} \node' \\
    
            \amodel,\node,\node' \Vdash i{:}
            & \iff &  g(i) = \node' \\
    
            \amodel,\node,\node' \Vdash \varphi?
            & \iff &  \node = \node' \mbox{ and } \amodel,\node \Vdash \varphi \\
    
            \amodel,\node,\node' \Vdash \alpha\beta
            & \iff & \mbox{exists} ~ \node'' \in \nodes \mbox{ s.t. }
                        \amodel,\node,\node'' \Vdash \alpha \mbox{ and } \amodel,\node'',\node'\Vdash \beta \\
                          	
            \amodel,\node \Vdash p
            & \iff & \node \in \V(p) \\
            
            \amodel,\node \Vdash i
            & \iff & g(i) = \node  \\

            \amodel,\node \Vdash \bot
            &\multicolumn{2}{@{}l}{\text{never}} \\
            
            \amodel,\node \Vdash \varphi\to\psi
            & \iff & \amodel,\node \Vdash \varphi \mbox{ implies } \amodel,\node \Vdash \psi \\

            \amodel,\node \Vdash @_i \varphi
            & \iff & \amodel, g(i)\Vdash \varphi\\


            \amodel,\node \Vdash \tup{\dowa}\varphi 
            & \iff & \mbox{exists $\node'\in\nodes$ s.t.\ } \amodel,\node,\node'\Vdash \dowa \mbox{ and } \amodel,\node'\Vdash \varphi \\ 

            \amodel,\node \Vdash \<\alpha=_{\compc}\beta\>
            & \iff & \mbox{exists $\node',\node''\in \nodes$ s.t.\ } 
                        \amodel,\node,\node' \Vdash \alpha, \ 
                        \amodel,\node,\node'' \Vdash \beta \mbox{ and }
                        \node' \deq_{\compc} \node''
        \\
            \amodel,\node\Vdash\<\alpha\neq_{\compc}\beta\>
            & \iff & \mbox{exists $\node',\node''\in \nodes$ s.t.\ } 
                        \amodel,\node,\node' \Vdash \alpha, \ 
                        \amodel,\node,\node'' \Vdash \beta \mbox{ and } \node' \dneq_{\compc} \node''.\\
        \end{array}
        \]
    Let $\Psi$ be a set of node expressions, we use $\amodel,\node\Vdash\Psi$ to indicate $\amodel,\node\Vdash\psi$ for all $\psi\in\Psi$. 
    We say $\Psi$ is \emph{satisfiable} iff there exists $\amodel,\node$ s.t.\ $\amodel,\node \Vdash \Psi$.
    We call a node expression $\varphi$ a \emph{consequence} of $\Psi$, written $\Psi \vDash \varphi$, iff $\Psi\cup\{\lnot \varphi\}$ is unsatisfiable.
    If $\Psi$ is the empty set, we write $\vDash \varphi$ and call $\varphi$ a \emph{tautology}.
\end{definition}

\Cref{prop:abbrv}, which follows directly from \Cref{def:semantics}, confirms that the abbreviations behave as intended.

\begin{proposition}\label[proposition]{prop:abbrv}
$\amodel,\node \Vdash \<\alpha\>\varphi$ iff $\mbox{exists $\node' \in \nodes$ s.t.},~
            \amodel,\node,\node' \Vdash\alpha$ and  
            $\amodel,\node' \Vdash \varphi$.
    Moreover, $\amodel,\node \Vdash [\alpha=_{\compc}\beta]$ iff for all $\node',\node''\in \nodes$,  
                $\amodel,\node,\node' \Vdash \alpha$ and 
                $\amodel,\node,\node'' \Vdash \beta$ imply  
                $\node' \deq_{\compc} \node''$.
    Finally,
    $\amodel,\node\Vdash[\alpha\neq_{\compc}\beta]$ iff for all $\node',\node''\in \nodes$, 
                $\amodel,\node,\node' \Vdash \alpha$ and 
                $\amodel,\node,\node'' \Vdash \beta$ imply  
                $\node' \dneq_{\compc} \node''$.
\end{proposition}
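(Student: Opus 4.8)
The plan is to handle the three claims in turn: the first by induction on the structure of the path expression $\alpha$, and the second and third by directly unfolding the abbreviation $[\alpha\cmpr\beta]\coloneqq\lnot\tup{\alpha\cmpd\beta}$ and appealing to the corresponding primitive clauses of \Cref{def:semantics}. Before starting, I would record two preliminary facts obtained by rewriting the propositional abbreviations in terms of $\to$ and $\bot$ and invoking the clauses for $\to$ and $\bot$: that $\amodel,\node\Vdash\lnot\varphi$ iff $\amodel,\node\not\Vdash\varphi$, and that $\amodel,\node\Vdash\varphi\land\psi$ iff $\amodel,\node\Vdash\varphi$ and $\amodel,\node\Vdash\psi$. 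This is the only place where the classical behaviour of the defined connectives is needed.

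For the first claim, the base case $\alpha=\dowa$ is immediate, since $\tup{\dowa}\varphi$ is primitive and its clause in \Cref{def:semantics} is exactly the statement. For $\alpha = i{:}$, unfolding $\tup{i{:}}\varphi\coloneqq @_i\varphi$ and using the $@$-clause gives $\amodel,g(i)\Vdash\varphi$, which is equivalent to the existence of a node $\node'$ with $g(i)=\node'$, i.e.\ $\amodel,\node,\node'\Vdash i{:}$, and $\amodel,\node'\Vdash\varphi$. For $\alpha = \psi?$, unfolding $\tup{\psi?}\varphi\coloneqq\psi\land\varphi$ and combining the preliminary fact about $\land$ with the $\psi?$-clause (whose only witness node is $\node$ itself) yields the claim. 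For the inductive step $\alpha=\alpha'\beta'$, I unfold $\tup{\alpha'\beta'}\varphi\coloneqq\tup{\alpha'}\tup{\beta'}\varphi$, apply the induction hypothesis first to $\alpha'$ (with the node expression $\tup{\beta'}\varphi$) and then to $\beta'$, and recognise the resulting two-witness condition as precisely the $\alpha'\beta'$-clause of \Cref{def:semantics}.

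For the remaining two claims I unfold $[\alpha=_{\compc}\beta]\coloneqq\lnot\tup{\alpha\neq_{\compc}\beta}$ and $[\alpha\neq_{\compc}\beta]\coloneqq\lnot\tup{\alpha=_{\compc}\beta}$, push the negation through using the preliminary fact about $\lnot$, apply the primitive clauses for $\tup{\alpha\neq_{\compc}\beta}$ and $\tup{\alpha=_{\compc}\beta}$, and turn the negated existential into a universal. The final step relies on $\dneq_{\compc}$ being the set-theoretic complement of $\deq_{\compc}$, which is how it is introduced in \Cref{def:models}. I do not expect a genuine obstacle here: the proposition is essentially a consistency check on the abbreviations, and each step is a direct appeal to \Cref{def:semantics}; the only care needed is in establishing the classical behaviour of $\lnot$ and $\land$ up front, and in tracking the two existential witnesses in the composition case of the induction.
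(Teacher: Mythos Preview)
Your proposal is correct and matches the paper's intent: the paper does not give a proof at all, merely stating that the proposition ``follows directly from \Cref{def:semantics},'' so your induction on path expressions for the first claim and direct unfolding of $[\alpha\cmpr\beta]\coloneqq\lnot\tup{\alpha\cmpd\beta}$ for the others is exactly the routine verification the authors leave implicit. The only minor remark is that \Cref{def:models} does not explicitly introduce the symbol $\dneq_{\compc}$; it is used in \Cref{def:semantics} as the complement of $\deq_{\compc}$, so your appeal to this fact is sound but the reference should be to the semantics clause rather than the model definition.
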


Some intuitive remarks are helpful here. Path expressions indicate reachability relations on nodes, i.e., nodes $n$ and $n'$ satisfy the path expression $\alpha$ if the pair $(n,n')$ is in the relation defined by $\alpha$; or, equivalently, if $n'$ is reachable from $n$ following an $\alpha$ path. The test expression $\varphi?$ checks whether the node expression $\varphi$ holds in a given point in a path. The jump-to-key expression $i{:}$ resets the current path to start from the node whose key is $i$. Node expressions $p$, $i$, $\bot$, $\varphi \to \psi$, $@_i\varphi$, and $\tup{\dowa}\varphi$, have classical readings (see~\cite{arec:hybr05b}). The novel data comparison operator $\<\alpha=_{\compc}\beta\>$ (resp. $\<\alpha \not=_{\compc}\beta\>$) requires the existence of paths $\alpha$ and $\beta$, from the current point of evaluation, whose end nodes are in the relation $\deq_{\compc}$ (resp.~$\dneq_{\compc}$) relation. This mirrors how XPath performs data comparisons over data graphs.

We conclude the presentation of $\hxpd$ by illustrating how the logic functions as a semi-structured query language over data graphs.
The idea is that hybrid data models represent data graphs, while node expressions represent queries that specify both structural properties (e.g., reachability) and data-aware conditions (e.g., comparisons between node values).
The following example clarifies this intuition.

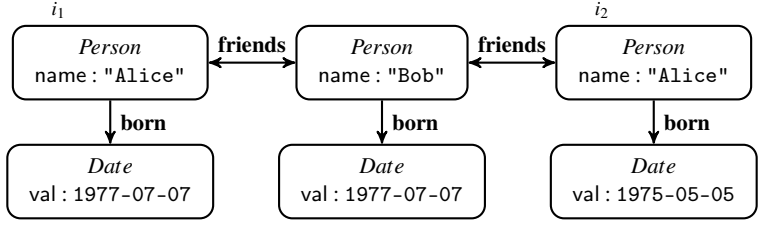
\begin{figure}[t]
	\begin{center}
		\begin{tikzpicture}[->, every node/.style={scale=0.95}]
			
			\node [state, label = {[label-state]above left:$i_1$}] (w1) {$\begin{array}{c}Person \\  
					{\sf name}: \texttt{"Alice"}\end{array}$};      
			\node [state, right = of w1] (w2) {$\begin{array}{c}Person \\  
					{\sf name}: \texttt{"Bob"}\end{array}$};  
			
			\node [state,  right = of w2, label = {[label-state]above left:$i_2$}] (w3) {$\begin{array}{c}Person \\  
					{\sf name}: \texttt{"Alice"}\end{array}$};      

			\node [state, below = of w1] (ww1) {$\begin{array}{c}Date\\  
					{\sf val}: \texttt{1977-07-07}\end{array}$};      
			\node [state, below = of w2] (ww2) {$\begin{array}{c}Date\\  
					{\sf val}: \texttt{1977-07-07}\end{array}$};      
			
			\node [state,  below = of w3] (ww3) {$\begin{array}{c}Date\\  
					{\sf val}: \texttt{1975-05-05}\end{array}$};      
			
			
			\path (w1) edge[<->] node [label-edge, above] {$\textbf{friends}$} (w2);
			\path (w2) edge[<->] node [label-edge, above] {$\textbf{friends}$} (w3);   
			
			\path (w1) edge node [label-edge, right] {$\textbf{born}$} (ww1);
			\path (w2) edge node [label-edge, right] {$\textbf{born}$} (ww2);
			
			\path (w3) edge node [label-edge, right] {$\textbf{born}$} (ww3);
			
		\end{tikzpicture}
	\end{center}
	\caption{Example of a Data Graph}\label{fig:datagraph}
\end{figure}

\begin{example}\label{example1}
    \Cref{fig:datagraph} shows a simple example of a data graph.  We will not formally define data graphs here, see~\cite{ArecesF21} for further details.  Intuitively, we can see that the graph encodes information about people and their birth dates, and it also includes a friendship relation. Notice that the graph encodes information of different types: propositional information (certain nodes are defined as {Persons}, others are {Dates}), relational information (e.g., the {\rm \textbf{born}} relation encodes the birthdate of a person), and concrete data encoded as pairs \emph{attribute:data} (e.g., one of the persons in the data graph is named {\rm \texttt{"Alice"}}).  Moreover, the labels $i_1$ and $i_2$ are indexes, that allows direct access to certain nodes in the data graph. 

    The information in the data graph shown in \Cref{fig:datagraph} can be directly encoded as a hybrid data model. For the structure of the data graph this is immediate. The propositional content in the data graph is captured using proposition symbols, while node indices are managed via the nominal assignment. Crucially, concrete data values are abstracted and represented through data comparisons. Namely, for each attribute~\texttt{c}, we introduce in the model a comparison relation $\deq_{\compc}$ such that $n_1 \deq_{\compc} n_2$ iff there is a data value \texttt{d} such that \textsf{c} : \texttt{d} (as shown in the figure) is both in $n_1$ and $n_2$---that is, $n_1$ and $n_2$ have the same value \texttt{d} in attribute \textsf{c}.
    The following hybrid data model corresponds to the data graph in \Cref{fig:datagraph}.
    \begin{align*}
    {\rm N} & = \{n_1, n_2, n_3, n_4, n_5, n_6\}
        & {\rm V}(Person) & = \{n_1,n_2,n_3\}
            & \deq_{\sf name} & = \{(n_1,n_3)\}^{\sf eq}
    \\
    {\rm R}_{\rm \bf friends} & = \{(n_1,n_2),(n_2,n_1),(n_2,n_3),(n_3,n_2)\}
        & {\rm V}(Date) & = \{n_4,n_5,n_6\}
            & \deq_{\sf val} & = \{(n_4,n_5)\}^{\sf eq}
    \\
    {\rm R}_{\rm \bf born} & = \{(n_1,n_4),(n_2,n_5),(n_3,n_6)\}
        & g(i_1) & =  n_1     & g(i_2) & = n_3
    \end{align*}
    We use $R^{\sf eq}$ to denote the closure of $R$ under reflexivity, symmetry and transitivity.
    The equivalence relations $\deq_{\sf name}$ and $\deq_{\sf val}$ are sufficient for the type of data operations permitted in $\hxpd$.

    With the hybrid data model in place, we now turn to how node expressions act as queries.
    The table below presents a few example properties, along with their formalizations as node expressions:
    \begin{table}[h]
        \footnotesize
        \centering
        \begin{tabular}{p{5cm}p{10cm}}
        \toprule
            \textsc{Property} & \textsc{Formalization} \\
            \midrule
            The person with key $i_1$ is a friend of someone born on the same day. 
            & $\tup{i_1{:}\,\textbf{born}\,(\mathit{Date}?) =_{\sf val} i_1{:}\,\textbf{friends}\,\textbf{born}\,(\mathit{Date}?)}$ \\
            \midrule
            The person with key $i_2$ shares a birth date with none of her friends. 
            & $[i_2{:}\,\textbf{born}\,(\mathit{Date}?) \neq_{\sf val} i_2{:}\,\textbf{friends}\,\textbf{born}\,(\mathit{Date}?)]$ \\
            \midrule
            The persons with keys $i_1$ and $i_2$ have the same name but different birth dates. 
            & ${\tup{i_1{:}\,(\mathit{Person}?) =_{\sf name} i_2{:}\,(\mathit{Person}?)}}\land \tup{i_1{:}\,\textbf{born}\,(\mathit{Date}?) \neq_{\sf val} i_2{:}\,\textbf{born}\,(\mathit{Date}?)}$ \\
            \bottomrule
        \end{tabular}
        \end{table}

    \noindent
    Each of the node expressions above evaluates to true at every point in the hybrid data model constructed from the example graph database. This showcases how $\hxpd$ combines structural navigation with data-aware comparisons to express rich, semantically meaningful queries.
\end{example}

To sum up, $\hxpd$ is of interest both in practice and in theory. On the practical side, it faithfully captures a fragment of XPath involving data-aware queries and key-based navigation. On the theoretical side, it poses new challenges for modal logic: its comparison modalities are complex as they combine navigation modalities with (in)equality reasoning.
Despite these challenges, in the next section we show that an elegant sequent calculus for $\hxpd$ can indeed be defined.

\section{A Sequent Calculus for $\hxpd$}
\label{sec:calculus}

In this section, we present our Gentzen-style sequent calculus $\gentzen$ for $\hxpd$.
The calculus draws inspiration from the modal system with equality developed in~\cite{Negri14}, but extends it to accommodate the distinctive features of $\hxpd$: nominals, satisfiability modalities, path expressions, and data comparisons.
In brief, we build $\gentzen$ on sequents $\Gamma \vdash \Delta$, where $\Gamma, \Delta$ is a set of node expressions in $\hxpd$.
Sequents capture the idea that if all node expressions in $\Gamma$ hold, then at least one node expression in $\Delta$ must also hold.
The inference rules of $\gentzen$ systematically decompose the expressions in sequents, closely mirroring the semantic behavior of each logical connective.
Out aim is to construct a proof system that addresses the entailment problem for the logic, and to provide a rule-based decomposition of the meaning of its logical connectives.
As we will see also, the calculus is particularly well-suited for structural analysis of proofs.
With the basic ideas in place, let us begin by: defining sequents, introducing the inference rules, and outlining the construction of derivations in~$\gentzen$.

\begin{definition}\label[definition]{def:sequent}
    A \emph{sequent} is a pair $\Gamma \vdash \Delta$, where $\Gamma$ and $\Delta$ are finite, possibly empty sets of node expressions of the form $\tagg{\cmpr}{i}{j}$ or $@_i\varphi$.\footnote{Following standard notation, we will not use curly brackets when writing down sequents.}
    In a sequent $\Gamma \vdash \Delta$, the set $\Gamma$ is called the \emph{antecedent} and the set $\Delta$ is called the \emph{consequent}.
    In turn, a \emph{rule} is a pair $(\Gamma \vdash \Delta, \{\Gamma_1 \vdash \Delta_1, \dots, \Gamma_n \vdash \Delta_n\})$, where $\Gamma \vdash \Delta$ is called the \emph{conclusion}, and $\{\Gamma_1 \vdash \Delta_1, \dots, \Gamma_n \vdash \Delta_n\}$ the set of \emph{premisses} of the rule.  If the set of premisses is empty, we say that the rule is an \emph{axiom}.
    In each rule, certain node expressions in the conclusion sequent are designated as \emph{principal}.
    These are the node expressions the rule acts upon. The remaining node expressions in the sequent are the \emph{context} of the rule, and are carried unchanged across its application. This distinction enables the rule to isolate and analyze the logical structure of the principal formulas while treating the context uniformly.
    The rules of $\gentzen$ are shown in \Cref{rules:hxpd}, in a standard format---i.e., the conclusion is shown under a line, the premisses are above the line in no particular order, the principal node expressions and the context are identified immediately from the presentation.
\end{definition}

With sequents and rules defined, we proceed to formalize the notion of a derivation in $\gentzen$.

\begin{definition}\label[definition]{def:derivation}
    A \emph{derivation} in $\gentzen$ is a sequent-labeled tree constructed according to the rules in \Cref{rules:hxpd}. More precisely,  each non-leaf node in the tree is a sequent that is the conclusion of a rule in $\gentzen$, its immediate successors in the tree (going upwards) are the premisses of that rule. 
    The root of the tree is called the \emph{end-sequent} of the derivation.
    A sequent is \emph{derivable} if it is the end-sequent of some derivation, and it is \emph{provable} if it has a derivation whose leaves are all instances of (Ax)~or~ ($\bot$).
    We use $\Gamma \vdash_{\gentzen} \Delta$ to indicate that $\Gamma \vdash \Delta$ is provable.
A rule is \emph{derived} iff there is a derivation of the conclusion of the rule whose leaves are either axioms or belong to the premisses of the rule.

%
%
    %
\end{definition}

\begin{figure}[h!]
   \centerline
  {
  \scalebox{\thescalefactor}
  {
  \begin{tabular}{@{}c@{}}
  
  \toprule
  \textbf{\textsc{Propositional Rules}} \\
  \midrule

  \tabularnewline[-7pt]
  \begin{tabular}{@{}c@{}}
    \prfbyaxiom{\footnotesize(Ax)$^{\ddagger}$}{\varphi, \Gamma \vdash \Delta, \varphi} 
    
    ~

    \prfbyaxiom{\footnotesize($\bot$)}{@_i\bot, \Gamma \vdash \Delta}
    
    ~

    \prftree[r]{\footnotesize($\to$L)}
    {\Gamma \vdash \Delta, @_i \varphi}{@_i \psi, \Gamma \vdash \Delta}
    {@_i(\varphi \to \psi), \Gamma \vdash \Delta}
    
    ~
    
    \prftree[r]{\footnotesize($\to$R)}
    {@_i \varphi, \Gamma \vdash \Delta, @_i\psi}
    {\Gamma \vdash \Delta, @_i(\varphi \to \psi)}

    \tabularnewline[5pt]
    {\footnotesize$^{\ddagger}$ $\varphi$ is of the form $@_ip$, $@_ij$, or $\tup{i{:} =_{\compc} j{:}}$}
  \end{tabular}

  \tabularnewline

  \midrule
  \textbf{\textsc{Rules for Nominals}} \\
  \midrule

  \tabularnewline[-7pt]
  \begin{tabular}{@{}c@{}}

    \prftree[r]{\footnotesize($@$T)}
    {@_ii, \Gamma \vdash \Delta}
    {\Gamma \vdash \Delta}
    
    \quad

    \prftree[r]{\footnotesize($@5$)}
    {@_jk, @_ij, @_ik, \Gamma \vdash \Delta}
    {@_ij, @_ik, \Gamma \vdash \Delta}

    \quad

    \prftree[r]{\footnotesize(Nom)$^{\dagger}$}
    {@_ij, \Gamma \vdash \Delta}
    {\Gamma \vdash \Delta}
    
    \tabularnewline[7pt]

    \prftree[r]{\footnotesize(S$_1$)$^{\ddagger}$}
    {@_j\varphi, @_ij, @_i\varphi, \Gamma \vdash \Delta}
    {@_ij, @_i\varphi, \Gamma \vdash \Delta}    

    \quad

    \prftree[r]{\footnotesize(S$_2$)}
    {\taag{\dowa}{i}{k}, @_jk, \taag{\dowa}{i}{j}, \Gamma \vdash \Delta}
    {@_jk, \taag{\dowa}{i}{j}, \Gamma \vdash \Delta}

    \quad

    \prftree[r]{\footnotesize(S$_3$)}
    {\tagg{=}{j}{k}, @_ij, \tagg{=}{i}{k}, \Gamma \vdash \Delta}
    {@_ij, \tagg{=}{i}{k}, \Gamma \vdash \Delta}

    \tabularnewline[5pt]
    {\footnotesize $^{\dagger}$ $j$ is not in the conclusion}
    \quad
    {\footnotesize $^{\ddagger}$ $\varphi$ is of the form $p$, $\bot$, or $\tup{\dowa}k$}
  \end{tabular}

  \tabularnewline

  \midrule
  \textbf{\textsc{Rules for Modalities}} \\
  \midrule

  \tabularnewline[-7pt]
  \begin{tabular}{@{}c@{}}
    \prftree[r]{\footnotesize($@$L)}
    {@_i\varphi, \Gamma \vdash \Delta }
    {@_j@_i\varphi, \Gamma \vdash \Delta}

    \quad

    \prftree[r]{\footnotesize($@$R)}
    {\Gamma \vdash \Delta,@_i\varphi}
    {\Gamma \vdash \Delta,@_j@_i\varphi}
    
    \quad
    
    \prftree[r]{\footnotesize(${\tup{\dowa}}$L)$^{\dagger}$}
    {@_i\tup{\dowa}j, @_j\varphi, \Gamma \vdash \Delta}
    {@_i\tup{\dowa}\varphi, \Gamma \vdash \Delta}
    
    \quad
    
    \prftree[r]{\footnotesize(${\tup{\dowa}}$R)}
    {@_i\tup{\dowa}j, \Gamma \vdash \Delta, @_i\tup{\dowa}\varphi, @_j\varphi}
    {@_i\tup{\dowa}j, \Gamma \vdash \Delta, @_i\tup{\dowa}\varphi}
    
    \tabularnewline[7pt]

    \prftree[r]{\footnotesize(${\tup{\cmpr}}$L)$^{\ddagger}$}
    {@_i\tup{\alpha}j, @_i\tup{\beta}k, \tagg{\cmpr}{j}{k}, \Gamma \vdash \Delta}
    {@_i\tup{\alpha \cmpr \beta}, \Gamma \vdash \Delta}

    \quad

    \prftree[r]{\footnotesize(${\tup{\cmpr}}$R)}
    {@_i\tup{\alpha}j, @_i\tup{\beta}k, \Gamma \vdash \Delta, @_i\tup{\alpha \cmpr \beta},\tagg{\cmpr}{j}{k}}
    {@_i\tup{\alpha}j, @_i\tup{\beta}k, \Gamma \vdash \Delta, @_i\tup{\alpha \cmpr \beta}}
    
    \tabularnewline[5pt]
    {\footnotesize$^{\dagger}$ $j$ is not in the conclusion}
    \qquad
    {\footnotesize $^{\ddagger}$ $j$ and $k$ are different and not in the conclusion}

  \end{tabular}

  \tabularnewline

  \midrule
  \textbf{\textsc{Rules for Data Comparison}} \\
  \midrule

  \tabularnewline[-7pt]
  \begin{tabular}{@{}c@{}}
    {\prftree[r]{\footnotesize(EqT)}
    {\tagg{=}{i}{i}, \Gamma \vdash \Delta}
    {\Gamma \vdash \Delta}}

    \quad

    {\prftree[r]{\footnotesize(Eq5)}
    {\tagg{=}{j}{k}, \tagg{=}{i}{j}, \tagg{=}{i}{k}, \Gamma \vdash \Delta}
    {\tagg{=}{i}{j}, \tagg{=}{i}{k}, \Gamma \vdash \Delta}}

	\quad

    {\prftree[r]{\footnotesize(NEqL)}
    {\Gamma \vdash \Delta, \tagg{=}{i}{j}}
    {\tagg{\neq}{i}{j}, \Gamma \vdash \Delta}}
    
    \quad

    {\prftree[r]{\footnotesize(NEqR)}
    {\tagg{=}{i}{j}, \Gamma \vdash \Delta}
    {\Gamma \vdash \Delta, \tagg{\neq}{i}{j}}}

  \end{tabular}
  
  \tabularnewline

  \midrule
  \textbf{\textsc{Structural Rules}} \\
  \midrule

  \tabularnewline[-7pt]
  \begin{tabular}{@{}c@{}}
    {\prftree[r]{\footnotesize(Cut)}
      {\Gamma \vdash \Delta, \varphi}
      {\varphi, \Gamma' \vdash \Delta'}
      {\Gamma, \Gamma' \vdash \Delta, \Delta'}}
    
    \quad

    {\prftree[r]{\footnotesize(WL)}
      {\phantom{\varphi,}\, \Gamma \vdash \Delta}
      {\varphi, \Gamma \vdash \Delta}}
    
    \quad

    {\prftree[r]{\footnotesize(WR)}
    {\Gamma \vdash \Delta\phantom{, \varphi}}
    {\Gamma \vdash \Delta, \varphi}}
  \end{tabular}
  
  \tabularnewline
  \bottomrule
  \end{tabular}
  }}
  \caption{Sequent Calculus $\gentzen$ for $\hxpd$.}\label[figure]{rules:hxpd}
\end{figure}

The inference rules of $\gentzen$ are organized into five groups, each corresponding to a different fragment of the language of $\hxpd$.
We briefly explain the rationale behind the rules in each group, highlighting the logical principles and the design choices they reflect.

\begin{description}
	\item[\textsc{Propositional Rules.}] The propositional rules govern implication and falsity, and follow familiar patterns from classical sequent calculi.
	Note that the node expression $\varphi$ in the axiom rule (Ax) is restricted in form. This restriction suffices for completeness. The generalized version of the rule where 
	$\varphi$ is of the form $@_i\psi$ for $\psi$ an arbitrary node expression can be obtained as a derived rule. \\[-1.5em]
	
	\item[\textsc{Rules for Nominals.}] The rules for nominals handle node expressions involving named points in the model, ensuring they behave as references to specific nodes. (@T) and (@5) characterize reflexivity and Euclideanness. The (Nom) rule allows a named node to be given a fresh alias. This plays a technical role in the completeness proof.
	Finally, the substitution rules (S$_m$) reflect that when $@_ij$ holds, then $i$ can be substituted by $j$ in different contexts. These rules are inspired by the treatment of equality in the labeled sequent calculus with equality from~\cite{Negri14}.
    Again, we note that all (S$_i$) are presented in a somewhat restricted form.
    Specifically, a more general version of (S$_1$) can be formulated without the restriction on $\varphi$.
    In turn, (S$_2$) can be generalized to handle arbitrary paths $\alpha$, not just atomic modalities $\dowa$.
    Finally, (S$_3$) can be generalized to handle arbitrary comparisons $\cmpr$, not just data equality.
    These generalized rules are derivable within the system.
    Presenting them in their restricted form helps to simplify the system by limiting the scope of interactions between rules, which is particularly beneficial for proving cut elimination.
    \\[-1.5em]
	
	\item[\textsc{Rules for Modalities.}] The modal rules capture the navigational fragment of the language.
	(@L) and (@R) reflect the global nature of the satisfiability operator: when the operator is nested, only the inner occurrence is relevant.
    The diamond rules (${\tup{\dowa}}$L) and (${\tup{\dowa}}$R) follow standard patterns from modal sequent calculi.
    In particular, (${\tup{\dowa}}$L) introduces fresh nominals as witnesses for the $\tup{\dowa}$ modality.
    Once again, one can consider generalized versions of the diamond rules in which the atomic modality $\dowa$ is replaced by an arbitrary path expression $\alpha$.
    These generalized forms are available as derived rules within the system.
    Finally, the rules $(\tup{\cmpr}\text{L})$ and $(\tup{\cmpr}\text{R})$ decompose a node expression $@_i\tup{\alpha \cmpr \beta}$ into its essential components: node expressions $@_i\tup{\alpha}j$ and $@_i\tup{\beta}k$, capturing the navigation paths, and the atomic comparison $\tup{j{:} \cmpr k{:}}$ at the endpoints of the paths.
    Notice there is a direct analogy between the behavior of these last two rules and the diamond rules.
    \\[-1.5em]

	\item[\textsc{Rules for Data Comparison.}] Data comparison rules handle equality and inequality between data values at the endpoints of paths.
	(EqT) and (Eq5) express that data equality is an equivalence relation, while (NeqL) and (NeqR) allow inequality to be reasoned about in terms of equality.  \\[-1.5em]

	\item[\textsc{Structural Rules.}] Finally, the structural rules govern manipulation of the sequent structure itself.
    These rules play a central role in the meta-theoretical analysis of the system.
	We will show in \Cref{sec:cut}, however, that these rules can be eliminated. 	
\end{description}

   \subsection{Soundness}
\label{sec:soundness}

We now turn to the proof of soundness for $\gentzen$.
We define a formal notion of sequent validity in the context of hybrid data models and then verify, via a case-by-case analysis, that each inference rule of $\gentzen$ preserves this notion of validity.
This strategy ensures that all provable sequents are valid.
We begin by formally defining the semantics of sequents in terms of the satisfaction relation introduced earlier.

\begin{definition}\label{def:seq:validity}
    A sequent $\Gamma \vdash \Delta$ is \emph{valid} iff for all $\amodel$, it follows that $\amodel \Vdash \Gamma$ implies $\amodel \Vdash \psi$ for some $\psi \in \Delta$.
    A rule preserves validity iff the validity of the premisses of the rule implies the validity of the conclusion of the rule.
\end{definition}


\begin{lemma}[Soundness]\label{lemma:soundness}
    Every rule in $\gentzen$ preserves validity.
\end{lemma}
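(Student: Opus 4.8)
### Proof Plan

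The plan is to prove the lemma by a routine but careful case analysis: for each rule of $\gentzen$ listed in \Cref{rules:hxpd}, I assume all premisses are valid and derive the validity of the conclusion, using \Cref{def:seq:validity} unfolded against the satisfaction conditions of \Cref{def:semantics} and the abbreviation behavior of \Cref{prop:abbrv}. Since a sequent $\Gamma \vdash \Delta$ uses only node expressions of the form $\tagg{\cmpr}{i}{j}$ or $@_i\varphi$, all of which are "global" (their truth at a model $\amodel$ does not depend on a distinguished evaluation node), I first record the convenient reformulation: $\amodel \Vdash @_i\varphi$ iff $\amodel, g(i) \Vdash \varphi$, and $\amodel \Vdash \tagg{\cmpr}{i}{j}$ iff $g(i) \deq_{\compc} g(j)$ (resp. $\not\deq_{\compc}$), so validity of $\Gamma\vdash\Delta$ just says: every model satisfying all of $\Gamma$ satisfies some member of $\Delta$.

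The bulk of the work splits into three easy families and one slightly more delicate one. First, the propositional rules (Ax), ($\bot$), ($\to$L), ($\to$R) and the structural rules (WL), (WR), (Cut) are handled exactly as in classical sequent calculus, reading $@_i(\varphi\to\psi)$ through the clause $\amodel,g(i)\Vdash\varphi\to\psi$ iff ($\amodel,g(i)\Vdash\varphi$ implies $\amodel,g(i)\Vdash\psi$); (Cut) is the standard "case split on whether $\amodel\Vdash\varphi$" argument. Second, the nominal rules (@T), (@5), (S$_1$), (S$_2$), (S$_3$) and the data-comparison rules (EqT), (Eq5), (NEqL), (NEqR) all exploit that $g(i)=g(j)$ whenever $\amodel\Vdash @_ij$ and that each $\deq_{\compc}$ is an equivalence relation: reflexivity gives (@T) and (EqT), Euclideanness of the nominal-equality relation gives (@5), transitivity/symmetry of $\deq_{\compc}$ gives (Eq5) and (S$_3$), and for the substitution rules (S$_1$), (S$_2$) one observes that if $g(i)=g(j)$ then $\amodel\Vdash@_i\varphi$ iff $\amodel\Vdash@_j\varphi$ and $\amodel\Vdash\taag{\dowa}{i}{k}$ iff $\amodel\Vdash\taag{\dowa}{j}{k}$, so adding the substituted expression to the antecedent of the premiss is sound; (NEqL)/(NEqR) use that $g(i)\deq_{\compc}g(j)$ and $g(i)\dneq_{\compc}g(j)$ are exactly complementary. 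Third, the modal rules (@L), (@R) use the idempotence $\amodel\Vdash@_j@_i\varphi$ iff $\amodel\Vdash@_i\varphi$ directly.

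The main obstacle — such as it is — lies in the rules that introduce \emph{fresh} nominals in a right-to-left (bottom-up) reading: (Nom), $(\tup{\dowa}\text{L})$, and $(\tup{\cmpr}\text{L})$, whose soundness requires the classic "rename a witness" argument. For $(\tup{\dowa}\text{L})$, suppose $\amodel\Vdash@_i\tup{\dowa}\varphi$ and $\amodel\Vdash\Gamma$; by \Cref{def:semantics} there is $\node'$ with $g(i)\R_{\dowa}\node'$ and $\amodel,\node'\Vdash\varphi$; define $\amodel'$ as $\amodel$ with $g$ modified so that $g'(j)=\node'$ — legitimate because $j$ does not occur in the conclusion, hence not in $\Gamma$, $\Delta$, $i$, or $\varphi$, so $\amodel'$ still satisfies $\Gamma$ together with $@_i\tup{\dowa}j$ and $@_j\varphi$; validity of the premiss then yields $\amodel'\Vdash\psi$ for some $\psi\in\Delta$, and since $j\notin\Delta$ this transfers back to $\amodel$. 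The arguments for (Nom) (pick $\node'=g(i)$, set $g'(j)=\node'$) and for $(\tup{\cmpr}\text{L})$ (extract the two path-witnesses $\node',\node''$ from $\amodel,g(i)\Vdash\tup{\alpha\cmpr\beta}$ via \Cref{prop:abbrv}, assign $g'(j)=\node'$, $g'(k)=\node''$, using that $j\neq k$ and neither occurs in the conclusion) are entirely parallel. For $(\tup{\dowa}\text{R})$ and $(\tup{\cmpr}\text{R})$ the nominals $j,k$ are already present in the conclusion, so no renaming is needed: one simply case-splits on whether the witnessing pair is the one named by $j$ (resp. $j,k$), invoking the premiss in the negative case. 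I would present the fresh-nominal cases in full detail and dispatch the remaining rules briefly, since each reduces to a one-line appeal to the relevant semantic clause or equivalence-relation axiom.
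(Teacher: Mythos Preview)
Your proposal is correct and follows essentially the same approach as the paper: a case analysis over the rules, with the fresh-nominal cases (Nom), $(\tup{\dowa}\text{L})$, $(\tup{\cmpr}\text{L})$ handled by modifying the nominal assignment to point to the required witness(es), and the remaining cases dispatched by direct appeal to the semantic clauses and the equivalence-relation axioms. The only cosmetic difference is that the paper phrases the representative cases contrapositively (assume a countermodel to the conclusion, build a countermodel to the premiss), whereas you argue directly; the content is identical.
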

\begin{proof}
    We present a selection of representative cases below. The remaining cases use a similar argument and can be verified by routine inspection.  In all cases below we reason by contradiction.
    \begin{description}
        \item[\textnormal{(Nom)}] 
            Let $\amodel[A]$ be a model s.t.\ $\amodel[A] \Vdash \Gamma$ and $\amodel[A] \nVdash \psi$ for all $\psi \in \Delta$.
            Then, introduce a new nominal $j$ that is not in $\Gamma,\Delta$ and build a model $\amodel[B]$ that is just like $\amodel[A]$ with the exception that $\amodel[B] \Vdash @_ij$.
            We have $\amodel[B] \Vdash @_ij, @_i\varphi, \Gamma$ and $\amodel[B] \nVdash \psi$ for all $\psi \in \Delta$.
            This contradicts the validity of the premiss of the rule.
        
        \item[\textnormal{($\tup{\cmpr}$L)}] We have: (1) $\cmpr$ is $=_{\compc}$, or (2) $\cmpr$ is $\neq_{\compc}$.
            For (1), take any model $\amodel[A]$ s.t.:
                $\amodel[A] \Vdash @_i\<\alpha =_{\compc} \beta\>, \Gamma$, and
                $\amodel[A] \nVdash \psi$ for all $\psi \in \Delta$.
            The semantics of $@_i\<\alpha =_{\compc} \beta\>$ tells us there are $n$ and $n'$ in $\amodel[A]$ s.t.:
                $\amodel[A], g(i), n \Vdash \alpha$,
                $\amodel[A], g(i), n' \Vdash \beta$, and
                $(n,n') \in {\approx_{\compc}}$.
            Then, choose nominals $j$ and $k$ that do not appear in $\Gamma,\Delta,\alpha,\beta$ and build a model $\amodel[B]$ that is identical to $\amodel[A]$ with the exception that $\amodel[B], n \Vdash j$ and $\amodel[B], n' \Vdash k$.
            It is clear that $\amodel[B] \Vdash @_i\<\alpha\>j, @_i\<\beta\>k, \tagg{=}{j}{k}, \Gamma$ and $\amodel[B] \nVdash \psi$ for all $\psi \in \Delta$.
            This contradicts the validity of the premiss of the rule.
            The case for  (2) is similar.
        
        \item[\textnormal{($\tup{\cmpr}$R)}] We have: (1) $\cmpr$ is $=_{\compc}$, or (2) $\cmpr$ is $\neq_{\compc}$.
            For (1), take any model $\amodel[A]$ s.t.:
                $\amodel[A] \Vdash @_i\<\alpha\>j, @_i\<\beta\>k, \Gamma$ and $\amodel[B] \nVdash \psi$ for all $\psi \in \Delta, @_i\<\alpha =_{\cmpr} \beta\>$.
            In particular, $\amodel[A] \nVdash @_i\<\alpha =_{\compc} \beta\>$.
            This means that for all $n$ and $n'$ in $\amodel[A]$, it follows that
                $\amodel[A], g(i), n \Vdash \alpha$,
                $\amodel[A], g(i), n' \Vdash \beta$, and
                $(n,n') \notin {\approx_{\compc}}$.
            This contradicts the validity of premiss of the rule.
            The case for (2) is similar.
            \qedhere
    \end{description}
\end{proof}

Soundness of $\gentzen$ follows from \Cref{lemma:soundness} by induction on the structure of a derivation of a sequent. 

\begin{theorem}[Soundness]\label{th:soundness}
    Every provable sequent in $\gentzen$ is valid, i.e., $\Gamma \vdash_\gentzen \Delta$ implies $\Gamma \vDash \Delta$.
\end{theorem}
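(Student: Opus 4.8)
The plan is to argue by induction on the structure of a derivation witnessing $\Gamma \vdash_{\gentzen} \Delta$. Recall from \Cref{def:derivation} that a provable sequent comes with a finite sequent-labeled tree, built according to the rules of \Cref{rules:hxpd}, all of whose leaves are instances of (Ax) or ($\bot$). Hence it suffices to establish two things: that every instance of these two leaf rules is valid in the sense of \Cref{def:seq:validity}, and that validity propagates downward through every rule application --- the latter being exactly the content of \Cref{lemma:soundness}.

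For the base of the induction I would argue directly from \Cref{def:seq:validity}. An instance of (Ax) has the shape $\varphi, \Gamma \vdash \Delta, \varphi$; any model $\amodel$ with $\amodel \Vdash \varphi, \Gamma$ satisfies $\varphi$, which occurs in the consequent, so the sequent is valid regardless of the syntactic restriction placed on $\varphi$. An instance of ($\bot$) has the shape $@_i\bot, \Gamma \vdash \Delta$; since $\bot$ is never satisfied, no model satisfies $@_i\bot$, so the antecedent is unsatisfiable and the implication in \Cref{def:seq:validity} holds vacuously.

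For the inductive step, suppose the end-sequent $\Gamma \vdash \Delta$ is the conclusion of a rule of $\gentzen$ whose premisses $\Gamma_1 \vdash \Delta_1, \dots, \Gamma_n \vdash \Delta_n$ are the end-sequents of strictly smaller derivations with leaves among (Ax) and ($\bot$). The induction hypothesis yields the validity of each premiss, and \Cref{lemma:soundness} then yields the validity of $\Gamma \vdash \Delta$. Since this step applies uniformly to all rules, it covers the structural rules (WL), (WR) and (Cut) as well, so soundness holds for the full calculus and not merely for its cut-free fragment. Unfolding \Cref{def:seq:validity} for the end-sequent is precisely the assertion $\Gamma \vDash \Delta$, finishing the argument.

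I do not expect a genuine obstacle: all the real work has been delegated to \Cref{lemma:soundness}, whose proof is a routine case analysis over the rule schemes. The only points that warrant an explicit remark are the vacuous validity of ($\bot$) and the observation that the side-conditions decorating (Ax), (Nom), and the diamond and comparison left rules are irrelevant to soundness --- they are imposed solely to constrain rule interactions for the completeness and cut-elimination results developed in the later sections.
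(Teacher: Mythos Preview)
Your proposal is correct and matches the paper's approach: the paper states in one line that soundness follows from \Cref{lemma:soundness} by induction on the structure of derivations, and your expansion of the base and inductive steps is exactly how that induction unfolds. One tiny observation: since (Ax) and ($\bot$) are rules with an empty set of premisses, \Cref{lemma:soundness} already covers them (validity of the empty set of premisses holds vacuously, so preservation of validity amounts to validity of the conclusion), making your separate base-case argument redundant but certainly not wrong.
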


   \subsection{Invertibility of Rules}
\label{sec:invertibility}

Having established the soundness of the calculus $\gentzen$, 
we prove one of its key properties: all inference rules are invertible.
Invertibility plays a central role in the proof-theoretic analysis of a sequent calculus. It allows backward reasoning---from a conclusion to its premisses---without loss of validity.
Invertible rules are also particularly well-suited for proof search procedures (see, e.g.,~\cite{Negri14}).
In our case, invertibility also plays a role in our proof of completeness of $\gentzen$, where we rely on the ability to apply certain rules in reverse to construct derivations..

\begin{definition}\label[definition]{def:invertible}
    A rule~$(\Rho)$ is \emph{invertible} iff there is a derivation of each premiss of the rule whose leaves are either axioms or the conclusion of the rule.
    Any such derivation is called an \emph{inverse} of the rule and is denoted by~$(\inv{\arule})$.
\end{definition}

\begin{theorem}\label{th:invertible}
    Every rule in $\gentzen$ is invertible.
\end{theorem}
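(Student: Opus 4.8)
Throughout, $(\mathrm{Ax})$, $(\bot)$, $(@\mathrm{T})$, $(@5)$, $(\mathrm{Nom})$, $(\mathrm{S}_1)$--$(\mathrm{S}_3)$, $(@\mathrm{L})$, $(@\mathrm{R})$, $(\tup{\dowa}\mathrm{L})$, $(\tup{\dowa}\mathrm{R})$, $(\tup{\cmpr}\mathrm{L})$, $(\tup{\cmpr}\mathrm{R})$, $(\mathrm{EqT})$, $(\mathrm{Eq5})$, $(\mathrm{NEqL})$, $(\mathrm{NEqR})$, $(\to\mathrm{L})$, $(\to\mathrm{R})$, $(\mathrm{WL})$, $(\mathrm{WR})$, $(\mathrm{Cut})$ refer to the rules of \Cref{rules:hxpd}. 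The plan is to verify invertibility by a case analysis over these rules, reducing each case to one of two uniform moves. The ingredient I would prepare first is a \emph{generalized axiom}: for every node expression $\chi$ admissible in a sequent---i.e.\ $@_i\psi$ with $\psi$ arbitrary, or an atomic comparison $\tagg{\cmpr}{i}{j}$---the sequent $\chi,\Gamma\vdash\Delta,\chi$ is provable without the structural rules. This is the unrestricted form of $(\mathrm{Ax})$ already announced after \Cref{rules:hxpd}, and I would prove it by induction on $\chi$ with the path abbreviations unfolded: the bases $@_ip$, $@_ij$, $\tagg{=}{i}{j}$ close by $(\mathrm{Ax})$ and $@_i\bot$ by $(\bot)$; $\tagg{\neq}{i}{j}$ reduces to the equality base via $(\mathrm{NEqL})$ on the left copy and $(\mathrm{NEqR})$ on the right copy; $@_i(\varphi\to\psi)$ reduces to the identities for $@_i\varphi$ and $@_i\psi$ using $(\to\mathrm{R})$ on the succedent copy and then $(\to\mathrm{L})$ on the antecedent copy; $@_i@_j\varphi$ reduces to the identity for $@_j\varphi$ via $(@\mathrm{L})$ and $(@\mathrm{R})$; $@_i\tup{\dowa}\varphi$ reduces to the identity for $@_j\varphi$ ($j$ fresh) via $(\tup{\dowa}\mathrm{L})$ on the left copy and $(\tup{\dowa}\mathrm{R})$ on the right copy; and $@_i\tup{\alpha\cmpr\beta}$ closes outright by applying $(\tup{\cmpr}\mathrm{L})$ to the left copy and then $(\tup{\cmpr}\mathrm{R})$ to the right copy \emph{with the same} fresh witnesses $j,k$, which leaves $\tagg{\cmpr}{j}{k}$ on both sides of the sequent. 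Crucially this derivation uses no cut.

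With this lemma available I would split the rules into two groups. The first group consists of the rules whose premiss differs from the conclusion only by one extra node expression in the antecedent or the succedent: $(@\mathrm{T})$, $(@5)$, $(\mathrm{Nom})$, $(\mathrm{S}_1)$, $(\mathrm{S}_2)$, $(\mathrm{S}_3)$, $(\mathrm{EqT})$, $(\mathrm{Eq5})$, $(\tup{\dowa}\mathrm{R})$ and $(\tup{\cmpr}\mathrm{R})$. For each of these the premiss follows from the conclusion by a single application of $(\mathrm{WL})$ or $(\mathrm{WR})$, the freshness side condition on $(\mathrm{Nom})$ being irrelevant in this direction. The zero-premiss rules $(\mathrm{Ax})$ and $(\bot)$ are invertible vacuously. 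The substantive content is therefore the remaining rules; $(\mathrm{WL})$, $(\mathrm{WR})$ and $(\mathrm{Cut})$ carry no principal formula to invert, and weakening in particular cannot be reversed, so I take the claim to concern the non-structural rules.

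The second group consists of the rules that replace a principal formula by its immediate constituents: $(\to\mathrm{L})$, $(\to\mathrm{R})$, $(@\mathrm{L})$, $(@\mathrm{R})$, $(\tup{\dowa}\mathrm{L})$, $(\tup{\cmpr}\mathrm{L})$, $(\mathrm{NEqL})$, $(\mathrm{NEqR})$. For each such rule and each of its premisses I would exhibit a small ``glue'' sequent $G$, derivable cut-free from the generalized axiom by one backward application of the dual rule, such that a single cut of the conclusion against $G$ on the principal formula produces exactly that premiss. For instance, for $(\tup{\dowa}\mathrm{L})$, with conclusion $@_i\tup{\dowa}\varphi,\Gamma\vdash\Delta$ and premiss $@_i\tup{\dowa}j,@_j\varphi,\Gamma\vdash\Delta$, take $G=(\,@_i\tup{\dowa}j,@_j\varphi\vdash@_i\tup{\dowa}\varphi\,)$, obtained by $(\tup{\dowa}\mathrm{R})$ from the generalized axiom $@_i\tup{\dowa}j,@_j\varphi\vdash@_i\tup{\dowa}\varphi,@_j\varphi$; cutting $G$ against the conclusion on $@_i\tup{\dowa}\varphi$ yields the premiss. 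The remaining glues are analogous: $(@_i\varphi\vdash@_j@_i\varphi)$ for $(@\mathrm{L})$, $(@_j@_i\varphi\vdash@_i\varphi)$ for $(@\mathrm{R})$, $(@_i(\varphi\to\psi),@_i\varphi\vdash@_i\psi)$ for $(\to\mathrm{R})$, the pair $(\,\vdash@_i\varphi,@_i(\varphi\to\psi)\,)$ and $(@_i\psi\vdash@_i(\varphi\to\psi))$ for the two premisses of $(\to\mathrm{L})$, $(@_i\tup{\alpha}j,@_i\tup{\beta}k,\tagg{\cmpr}{j}{k}\vdash@_i\tup{\alpha\cmpr\beta})$ for $(\tup{\cmpr}\mathrm{L})$, and $(\,\vdash\tagg{=}{i}{j},\tagg{\neq}{i}{j}\,)$ and $(\tagg{=}{i}{j},\tagg{\neq}{i}{j}\vdash\,)$ for $(\mathrm{NEqL})$ and $(\mathrm{NEqR})$. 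In each case the single cut reconstitutes the desired premiss up to at most a harmless weakening.

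The case analysis is otherwise mechanical; the delicate point is the interaction with cut elimination. Every inverse derivation above uses a cut, so the argument relies on $(\mathrm{Cut})$ being a rule of $\gentzen$---harmless for the statement as given, but if the cut-elimination proof of \Cref{sec:cut} were to invoke invertibility, the cuts would have to be removed and invertibility reproved cut-free by a height-preserving induction on a cut-free derivation of the conclusion. That version is considerably more work: one must treat all rules simultaneously, rename the fresh witnesses of $(\mathrm{Nom})$, $(\tup{\dowa}\mathrm{L})$ and $(\tup{\cmpr}\mathrm{L})$ whenever they clash with the formulas introduced by the inversion, and take some care with $(\tup{\dowa})$ and $(\tup{\cmpr})$ where the principal formula survives into a premiss. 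I expect that cut-free, height-preserving variant---together with getting the generalized-axiom induction exactly right through the abbreviation unfolding---to be the main obstacle; the cut-assisted argument sketched here is essentially routine.
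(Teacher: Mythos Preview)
Your approach is essentially the same as the paper's: one group of rules is inverted by a single weakening, the other by cutting the conclusion against a small glue sequent obtained from the dual rule applied to a generalized axiom. The paper only displays $(\inv{@\mathrm{L}})$, $(\inv{\tup{\dowa}\mathrm{L}})$ and $(\inv{\tup{\cmpr}\mathrm{L}})$ explicitly and defers the generalized axiom $(\mathrm{AxG})$ to the completeness section; your sketch is more thorough, spelling out the induction for $(\mathrm{AxG})$ and covering $(\mathrm{NEqL})$, $(\mathrm{NEqR})$ and $(@\mathrm{R})$ as well.

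Two remarks on your closing discussion. First, your observation that the structural rules---$(\mathrm{WL})$, $(\mathrm{WR})$, $(\mathrm{Cut})$---are not invertible in the sense of \Cref{def:invertible} is correct; the paper simply passes over them in silence, and your reading that the claim is meant for the logical rules is the right one. Second, your concern about circularity with cut elimination is methodologically sound in general, but does not bite here: the proof of \Cref{th:cut:elimination} proceeds by direct syntactic permutation and never invokes \Cref{th:invertible}, so the cut-assisted inversions you give are adequate for the statement as it stands.
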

\begin{proof}
    Invertibility of propositional rules is standard.
    For
    (@T), 
    (@5),
    (Nom),
    $(\text{S}_1)$,
    $(\text{S}_2)$,
    $(\text{S}_3)$,
    $(\tup{\dowa}\text{R})$,
    $(\tup{\cmpr}\text{R})$,
    (EqT), and
    (Eq5) 
    invertibility follows by weakening.
    We show ($\inv{@\text{L}}$), ($\inv{\tup{\dowa}\text{L}}$), and ($\inv{\tup{\cmpr}\text{L}}$).%
    \footnote{We use colors to guide derivations: \cc{cyan} for cut expressions, \rr{pink} for principal, and \pp{magenta} for other relevant elements.}

    \smallskip
    \noindent
        Case ($\inv{@\text{L}}$): Given a derivation of ${@_i\varphi, \Gamma \vdash \Delta}$, we build a derivation of $@_i\varphi$  as:
        $$\scalebox{\thescalefactor}
        {
            \prftree[r]{\footnotesize(Cut)}
            {
                \prftree[r]{\footnotesize($@$R)}
                {
                    \prfbyaxiom{\footnotesize(Ax)}
                    {@_i\varphi, \Gamma \vdash \Delta, @_i\varphi}
                }
                {@_i\varphi, \Gamma \vdash \Delta, \cc{@_j@_i\varphi}}
            }
            {
                \prfassumption
                {\cc{@_j@_i\varphi}, \Gamma \vdash \Delta}
            }
            {@_i\varphi, \Gamma \vdash \Delta}
        }$$
        Case ($\inv{\tup{\dowa}\text{L}}$):
        Given a derivation of ${@_i\<\dowa\>\varphi}, \Gamma \vdash \Delta$, we build a derivation of $\taag{\dowa}{i}{j}, @_j\varphi, \Gamma \vdash \Delta$ as:
        $$\scalebox{\thescalefactor}
        {
            \prftree[r]{\footnotesize(Cut)}
            {
                \prftree[r]{\footnotesize($\<\dowa\>$R)}
                {
                    \prfbyaxiom{\footnotesize(Ax)}
                    {\taag{\dowa}{i}{j}, @_j\varphi, \Gamma \vdash \Delta, @_i\<\dowa\>\varphi, @_j\varphi}
                }
                {\taag{\dowa}{i}{j}, @_j\varphi, \Gamma \vdash \Delta, \cc{@_i\<\dowa\>\varphi}}
            }
            {
                \prfassumption
                {\cc{@_i\<\dowa\>\varphi}, \Gamma \vdash \Delta}
            }
            {\taag{\dowa}{i}{j}, @_j\varphi, \Gamma \vdash \Delta}
        }$$
        Case (${\tup{\cmpr}\text{L}}$):
        Given a derivation of ${{@_i\tup{\alpha \cmpr \beta}}, \Gamma \vdash \Delta}$, we derive ${@_i\tup{\alpha}j, @_i\tup{\beta}k, \tagg{\cmpr}{j}{k}, \Gamma \vdash \Delta}$ as:
        $$\scalebox{\thescalefactor}
        {
            \prftree[r]{\footnotesize(Cut)}
            {
                \prftree[r]{\footnotesize($\<\cmpr\>$R)}
                {
                    \prfbyaxiom{\footnotesize(Ax)}
                    {@_i\tup{\alpha}j, @_i\tup{\beta}k, \tagg{\cmpr}{j}{k}, \Gamma \vdash \Delta, @_i\tup{\alpha \cmpr \beta}, \tagg{\cmpr}{j}{k}}
                }
                {@_i\tup{\alpha}j, @_i\tup{\beta}k, \tagg{\cmpr}{j}{k}, \Gamma \vdash \Delta, \cc{@_i\tup{\alpha \cmpr \beta}}}
            }
            {
                \prfassumption
                {\cc{@_i\tup{\alpha \cmpr \beta}}, \Gamma \vdash \Delta}
            }
            {@_i\tup{\alpha}j, @_i\tup{\beta}k, \tagg{\cmpr}{j}{k}, \Gamma \vdash \Delta}
        }\qedhere$$
\end{proof}

   \subsection{Completeness}
\label{sec:completeness}

In this section, we establish the completeness of~$\gentzen$.
Rather than constructing canonical models, we prove completeness by way of a Hilbert-style axiomatization for~$\hxpd$, which we refer to as~$\hilbert$.
The axiomatization, introduced in~\cite{ArecesF21}, has been shown to be both sound and complete.
Concretely, we demonstrate that every theorem of~$\hilbert$ corresponds to a provable sequent in~$\gentzen$.
This ensures that all semantically valid sequents are derivable in the sequent calculus.
Our strategy underscores the expressiveness of~$\gentzen$, and leverages existing results.
As a further benefit, we reveal a close correspondence between axiomatic and sequent-based reasoning in~$\hxpd$.

\begin{lemma}\label[lemma]{lemma:derivability}
    Let $\vdash_{\hilbert}^n \varphi$ indicate that $\varphi$ is a theorem in $\hilbert$, with a derivation of length $n$.
    In addition, let $i$ be a nominal not in $\varphi$.
    It follows that, for any $n$, $\vdash_{\hilbert}^n \varphi$ implies $\vdash_{\gentzen} @_i\varphi$ (i.e, $\vdash @_i\varphi$ is provable in $\gentzen$). 
\end{lemma}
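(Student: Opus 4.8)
The plan is to proceed by induction on the length $n$ of the $\hilbert$-derivation of $\varphi$. For the base case ($n = 1$), the derivation consists of a single step, so $\varphi$ must be an instance of an axiom schema of $\hilbert$. For each axiom schema, I would exhibit a concrete $\gentzen$-derivation of $@_i\varphi$ whose leaves are instances of (Ax) or $(\bot)$; this is where the bulk of the routine work lies, since there is one verification per schema. The key technical device here is the invertibility of the left-introduction rules established in \Cref{th:invertible}: to prove $\vdash_\gentzen @_i\varphi$ one reads $@_i\varphi$ from the consequent side, repeatedly applying $(\to\text{R})$, $(\tup{\dowa}\text{R})$, $(\tup{\cmpr}\text{R})$, $(@\text{R})$, etc., to decompose it, and then applies the corresponding left rules and the rules for nominals, modalities, and data comparison to close the branches. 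The generalized/derived forms mentioned in \Cref{sec:calculus} (unrestricted (Ax), (S$_1$), generalized (S$_2$), (S$_3$), and the diamond and substitution rules lifted to arbitrary paths) are the natural tools for handling the axioms governing path composition $\alpha\beta$, the jump-to-key $i{:}$, and the interaction axioms for $@$.

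For the inductive step, suppose $\vdash_\hilbert^n \varphi$ with $n > 1$. Then $\varphi$ is either an axiom instance — handled exactly as in the base case — or it is obtained from earlier theorems by one of the inference rules of $\hilbert$: (MP), (Nec), (Name), or (Paste). For (MP), if $\varphi$ follows from $\psi \to \varphi$ and $\psi$, the induction hypothesis gives $\vdash_\gentzen @_i(\psi \to \varphi)$ and $\vdash_\gentzen @_i\psi$; I then combine these using $(\to\text{L})$ together with (Cut) and weakening to obtain $\vdash_\gentzen @_i\varphi$. For (Nec), where $\varphi = [\dowa]\psi$ (equivalently $\lnot\tup{\dowa}\lnot\psi$) is inferred from $\psi$: the induction hypothesis yields $\vdash_\gentzen @_j\psi$ for any fresh $j$, and I would derive $@_i[\dowa]\psi$ by decomposing the box on the consequent via the (abbreviation-induced) box-right behaviour, which introduces a fresh witness $j$ with $@_i\tup{\dowa}j$ in the antecedent and reduces the goal to $@_j\psi$, closable by the induction hypothesis. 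The rules (Name) and (Paste), which concern the hybrid machinery, are mirrored by (Nom) together with $(\tup{\dowa}\text{L})$/$(@\text{L})$ — (Nom) precisely provides the fresh alias that (Name) needs, and the pasting of nominal information is recovered by combining (Nom), the left modal rules, and the substitution rules (S$_m$), again appealing to (Cut) to splice in the induction hypotheses.

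The main obstacle I anticipate is not any single step but the bookkeeping in the inductive step for (Nec), (Name), and (Paste): these $\hilbert$-rules carry side conditions on the freshness of nominals, and matching them to the freshness side conditions on (Nom), $(\tup{\dowa}\text{L})$, and $(\tup{\cmpr}\text{L})$ requires care to ensure the fresh nominals chosen in the $\gentzen$-derivation do not clash with those already appearing. A secondary difficulty is that several $\hilbert$ axioms — particularly those regulating path composition and the jump-to-key operator — have no direct counterpart among the primitive rules of $\gentzen$, so their derivations must be routed through the derived generalized rules; verifying that these generalized rules really are derivable (as claimed in \Cref{sec:calculus}) is a prerequisite that I would either establish separately or fold into this proof. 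Once these derived rules are in hand, each axiom schema is discharged by a finite, mechanical $\gentzen$-derivation, and the induction closes.
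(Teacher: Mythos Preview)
Your proposal is correct and follows essentially the same approach as the paper: strong induction on the length of the $\hilbert$-derivation, with the base case discharging each axiom schema via an explicit $\gentzen$-derivation (aided by derived rules such as the generalized (Ax), $(\land\text{L/R})$, $(\liff\text{R})$, and a symmetry rule for $\cmpr$), and the inductive step handling (MP), (Nec), (Name), (Paste) case by case using (Cut), weakening, and the left modal rules. One small recalibration: in the paper, invertibility (specifically $(\inv{@\text{L}})$ and $(\inv{{\to}\text{R}})$) is deployed chiefly in the inductive step for (Paste) to extract usable sequents from the induction hypothesis, rather than in the base case where the axiom derivations are largely direct.
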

\begin{proof}
    The proof proceeds by (strong) induction on the length of a derivation in $\hilbert$.
    The base case requires that each axiom $\varphi$ in $\hilbert$ has a corresponding provable sequent $\vdash_\gentzen @_i\varphi$.
    To simplify these proofs we will make use of the following derived rules in $\gentzen$. 
 
    \medskip
\noindent
\textbf{Derived Rules.} 
We introduce derived rules for treating other propositional connectives as if they were primitive.
Notably,
$$    \scalebox{\thescalefactor}
{
	
        {\prftree[r]{\small($\top$L)}
        {@_i\top, \Gamma \vdash \Delta}
        {\Gamma \vdash \Delta}}
        ~
        ~
        {\prftree[r]{\small($\land$L)}
        {@_i \varphi, @_i \psi, \Gamma \vdash \Delta }
        {@_i(\varphi \land \psi), \Gamma \vdash \Delta}}
        ~
        {\prftree[r]{\small($\land$R)}
        {\Gamma \vdash \Delta, @_i\psi}
        {\Gamma \vdash \Delta, @_i\varphi}
        {\Gamma \vdash \Delta, @_i(\varphi \land \psi)}}
        ~
        {\prftree[r]{\small($\liff$R)}
        {@_i\varphi, \Gamma \vdash \Delta, @_i\psi}
        {@_i\psi, \Gamma \vdash \Delta, @_i\varphi}
        {\Gamma \vdash \Delta, @_i(\varphi \liff \psi)}}
}
$$


\noindent
As mentioned earlier, we introduce a generalized form of (Ax) as a derived rule.
$$    \scalebox{\thescalefactor}
{
		\prfbyaxiom{\footnotesize(AxG)}{@_i\varphi, \Gamma \vdash \Delta, @_i\varphi}
%
}
$$
\noindent
In (AxG) $\varphi$ is an arbitrary node expression.
When no confusion arises, we will also refer to this rule simply as (Ax).
Finally, we introduce derived rules characterizing the symmetry of $\cmpr$ explicitly. Precisely,
 $$	    \scalebox{\thescalefactor}
 {
 	\prftree[r]{($\tup{\cmpr}$B).} 
        {\tup{j{:} \cmpr i{:}}, \Gamma \vdash \Delta}
        {\tup{i{:} \cmpr j{:}}, \Gamma \vdash \Delta}
    }$$

\noindent
\textbf{Base Case.} We are now ready to present derivations for selected axioms of $\hilbert$.
Beyond their technical role, these derivations also serve an explanatory purpose: they illustrate how the rules of $\gentzen$ reveal the semantic behavior of logical connectives, offering insight into their proof-theoretic interpretation and clarifying the structure of the corresponding axioms.
We focus on the axioms (equal), ($\cmpr$-comm), and ($\epsilon$-trans) expressing the equivalence properties of data equality: reflexivity, symmetry, and transitivity.
These axioms are listed below for reference.
\begin{align*}
    \text{(equal)}
        &~ \tup{\epsilon =_{\compc} \epsilon}
    \\
    \text{($\cmpr$-comm)}
        &~ \tup{\alpha \cmpr \beta} \liff \tup{\beta \cmpr \alpha}
    \\
    \text{($\epsilon$-trans)}
        &~ \tup{\alpha =_{\compc} \epsilon} \land \tup{\epsilon =_{\compc} \beta} \to \tup{\alpha =_{\compc} \beta}.
\end{align*}
For (equal), we must show
    $\vdash_{\gentzen} @_i\tup{\epsilon =_{\compc} \epsilon}$.
For ($\cmpr$-comm), we must show
    $\vdash_{\gentzen} @_i(\tup{\alpha \cmpr \beta} \liff \tup{\beta \cmpr \alpha})$.
Finally, for ($\epsilon$-trans), we must show
    $\vdash_{\gentzen} @_i(\tup{\alpha =_{\compc} \epsilon} \land \tup{\epsilon =_{\compc} \beta} \to \tup{\alpha =_{\compc} \beta})$.
We deal with each of these cases individually.


\medskip\noindent 
\textit{Reflexivity.} The derivation below establishes $\vdash_{\gentzen} @_i\tup{\epsilon =_{\compc} \epsilon}$.
        \begin{center}\scalebox{\thescalefactor}
            {
                \prftree[r]{\footnotesize(@T)}
                {
                    \prftree[r]{\footnotesize($\top$L)}
                    {
                        \prftree[r]{\footnotesize($\inv{{\land}\text{L}}$)}
                        {
                            \prftree[r]{\footnotesize($\tup{\cmpr}\text{R}$)}
                            {
                                \prftree[r]{\footnotesize(EqT)}
                                {
                                    \prfbyaxiom{\footnotesize(Ax)}
                                    {\tagg{=}{i}{i}, @_i\tup{\epsilon}i \vdash @_i\tup{\epsilon =_{\compc} \epsilon}, \tagg{=}{i}{i}}
                                }
                                {@_i\tup{\epsilon}i \vdash @_i\tup{\epsilon =_{\compc} \epsilon}, \tagg{=}{i}{i}}
                            }
                            {@_i\tup{\epsilon}i \vdash @_i\tup{\epsilon =_{\compc} \epsilon}}
                        }
                        {@_i\top, @_ii \vdash @_i\tup{\epsilon =_{\compc} \epsilon}}
                    }
                    {@_ii \vdash @_i\tup{\epsilon =_{\compc} \epsilon}}
                }
                {\vdash @_i\tup{\epsilon =_{\compc} \epsilon}}
            }
        \end{center}
        When reading the derivation bottom-up, most of the effort goes into constructing an empty path on the antecedent side of the sequent.
        Intuitively, the empty path encodes the idea that we remain at the current node.
        We then use ($\tup{\cmpr}$R) to compare the data at the node and itself.
        The derivation concludes with the rule (EqT) just before (Ax), which captures the reflexivity of data equality in $\gentzen$.

\medskip\noindent
    \textit{Symmetry.} The next derivation establishes $\vdash_{\gentzen} @_i(\tup{\alpha \cmpr \beta} \liff \tup{\beta \cmpr \alpha})$.
        \begin{center}\scalebox{\thescalefactor}
            {
                \prftree[r]{\footnotesize($\liff$R)}
                {
                    \prftree[r]{\footnotesize($\tup{\cmpr}$L)}
                    {
                        \prftree[r]{\footnotesize($\tup{\cmpr}$R)}
                        {
                            \prftree[r]{\footnotesize($\tup{\cmpr}$B)}
                            {
                                \prfbyaxiom{\footnotesize(Ax)}
                                {@_i\tup{\alpha}j, @_i\tup{\beta}k, \tagg{\cmpr}{k}{j} \vdash @_i\tup{\beta \cmpr \alpha},\tagg{\cmpr}{k}{j}}
                            }
                            {@_i\tup{\alpha}j, @_i\tup{\beta}k, \tagg{\cmpr}{j}{k} \vdash @_i\tup{\beta \cmpr \alpha},\tagg{\cmpr}{k}{j}}
                        }
                        {@_i\tup{\alpha}j, @_i\tup{\beta}k, \tagg{\cmpr}{j}{k} \vdash @_i\tup{\beta \cmpr \alpha}}
                    }
                    {@_i\tup{\alpha \cmpr \beta} \vdash @_i\tup{\beta \cmpr \alpha}}
                }
                {
                    \prftree[r]{\footnotesize($\tup{\cmpr}$L)}
                    {
                        \prftree[r]{\footnotesize($\tup{\cmpr}$R)}
                        {
                            \prftree[r]{\footnotesize($\tup{\cmpr}$B)}
                            {
                                \prfbyaxiom{\footnotesize(Ax)}
                                {@_i\tup{\beta}k,@_i\tup{\alpha}j, \tagg{\cmpr}{j}{k} \vdash @_i\tup{\alpha \cmpr \beta},\tagg{\cmpr}{j}{k}}
                            }
                            {@_i\tup{\beta}k,@_i\tup{\alpha}j, \tagg{\cmpr}{k}{j} \vdash @_i\tup{\alpha \cmpr \beta},\tagg{\cmpr}{j}{k}}
                        }
                        {@_i\tup{\beta}k,@_i\tup{\alpha}j, \tagg{\cmpr}{k}{j} \vdash @_i\tup{\alpha \cmpr \beta}}
                    }
                    {@_i\tup{\beta \cmpr \alpha} \vdash @_i\tup{\alpha \cmpr \beta}}
                }
                {\vdash @_i(\tup{\alpha \cmpr \beta} \liff \tup{\beta \cmpr \alpha})}
            }
        \end{center}
        As before, reading the derivation bottom-up, we observe that we first construct the appropriate paths in the antecedent.
        These paths serve to navigate to the nodes whose data values we aim to compare.
        As mentioned, in $\gentzen$ the derived rule ($\tup{\cmpr}$B) captures symmetry.
        The derivation concludes with an application of this rule before (Ax).

\medskip\noindent
    \textit{Transitivity.} The final derivation establishes $\vdash_{\gentzen} @_i(\tup{\alpha =_{\compc} \epsilon} \land \tup{\epsilon =_{\compc} \beta} \to \tup{\alpha =_{\compc} \beta})$.
        \begin{center}\scalebox{\thescalefactor}
            {
                \prftree[r]{\footnotesize($\to$R,$\land$L)}
                {
                    \prftree[r]{\footnotesize($\tup{\cmpr}$L)}
                    {
                        \prftree[r]{\footnotesize($\land$L,$\tup{\cmpr}$R)}
                        {
                            \prftree[r]{\footnotesize(@5,WL)}
                            {
                                \prftree[r]{\footnotesize($\text{S}_3$,WL)}
                                {
                                    \prftree[r]{\footnotesize($\tup{\cmpr}$B)}
                                    {
                                        \prftree[r]{\footnotesize(Eq5)}
                                        {
                                            \prfbyaxiom{\footnotesize(Ax)}
                                            {\tagg{=}{a}{b}, \tagg{=}{c}{a}, \tagg{=}{c}{b} \vdash \tagg{=}{a}{b}}
                                        }
                                        {\tagg{=}{c}{a}, \tagg{=}{c}{b} \vdash \tagg{=}{a}{b}}
                                    }
                                    {\tagg{=}{a}{c}, \tagg{=}{c}{b} \vdash \tagg{=}{a}{b}}
                                }
                                {@_{d}c, \tagg{=}{a}{c}, \tagg{=}{d}{b} \vdash \tagg{=}{a}{b}}
                                }
                            {@_ic, @_id, @_i\top, \tagg{=}{a}{c}, @_i\tup{\alpha}a, @_i\tup{\beta}b, \tagg{=}{d}{b} \vdash @_i\tup{\alpha =_{\compc} \beta},\tagg{=}{a}{b}}
                        }
                        {@_i\tup{\alpha}a, \rr{@_i\tup{\epsilon}c}, \tagg{=}{a}{c}, \rr{@_i\tup{\epsilon}d}, @_i\tup{\beta}b, \tagg{=}{d}{b} \vdash @_i\tup{\alpha =_{\compc} \beta}}
                    }
                    {@_i\tup{\alpha =_{\compc} \epsilon}, @_i\tup{\epsilon =_{\compc} \beta} \vdash @_i\tup{\alpha =_{\compc} \beta}}
                }
                {\vdash @_i(\tup{\alpha =_{\compc} \epsilon} \land \tup{\epsilon =_{\compc} \beta} \to \tup{\alpha =_{\compc} \beta})}
            }
        \end{center}
        The key idea is to chain two data equalities together.
        The derivation (going again bottom-up) begins by using empty paths $\epsilon$ to refer to the intermediate node that links $\alpha$ and $\beta$.
        Because of the side conditions on the rules being used, this intermediate node ends up having two names $c$ and $d$.
        The rule (@5) indicates $c$ and $d$ are alias.
        Given this information, the rule (S$_3$) replaces $d$ by $c$ in $\tagg{=}{d}{b}$, yielding $\tagg{=}{c}{b}$.
        The rule ($\tup{\cmpr}$B) prepares the sequent for the application of (Eq5), which captures transitivity in $\gentzen$, before ending with (Ax).
\medskip

To sum up, (equal), ($\cmpr$-comm), and ($\epsilon$-trans) are introduced in \cite{ArecesF21} to express the equivalence properties of data equality.
Arguably, rules (EqT) and (Eq5) make these properties more transparent.
The derivation in $\gentzen$ of all axioms in $\hilbert$ can be found in~\cite{arxiv}.

\medskip
\noindent\textbf{Inductive Case.}    
    We must show that $\vdash_{\hilbert}^m \psi$ implies $\vdash_{\gentzen} @_i \psi$. The inductive hypothesis (IH) is: for all $1 \leq n < m$, if $\vdash_{\hilbert}^{n} \varphi$, then $\vdash_{\gentzen} @_i\varphi$, for some nominal $i$ not occurring in~$\varphi$.
    We proceed by cases, depending on whether ${\vdash_{\hilbert}^m} \psi$ is obtained using one of the four inference rules in $\hilbert$: (MP), (Nec), (Name), or (Paste). The first three cases follow directly; details can be found in~\cite{arxiv}.
    Let us discuss the (Paste) rule:
    $$\scalebox{\thescalefactor}
    {
        \prftree[r]{{\footnotesize where $j$ and $k$ are different and not in $\chi$, $\alpha$, $\beta$.}}
        {\vdash @_j\<\dowa\>k \land \<k{:}\alpha \cmpr \beta\> \to \chi}
        {\vdash \<j{:}\dowa\alpha \cmpr \beta\> \to \chi}
    }
    $$
    
 \noindent    
    An application of this rule assumes that 
        for $n < m$ 
            $\vdash_{\hilbert}^{n} (@_j\<\dowa\>k \land \<k{:}\alpha \cmpr \beta\>) \to \chi$.
    From the IH, $\vdash_{\gentzen} @_i((@_j\<\dowa\>k \land \<k{:}\alpha \cmpr \beta\>) \to \chi)$.
    The next derivation establishes $\vdash_{\gentzen} @_i(\<j{:}\dowa\alpha \cmpr \beta\> \to \chi)$, as required.
    \vspace{-.5cm}
    \begin{center}\scalebox{\thescalefactor}
        {
        \prftree[r]{\footnotesize($\to$R)}
        {
            \prftree[r]{\footnotesize($\tup{\cmpr}$L)}
            {
            \prftree[r]{\footnotesize($@$L)}
            {
                \prftree[r]{\footnotesize($\tup{\dowa}$L)}
                {
                    \prftree[r]{\footnotesize(Cut)}
                    {
                        \prftree[r]{\footnotesize($\land$R,W$*$)}
                        {
                            \prftree[r]{\footnotesize($@$R)}
                            {
                            \prfbyaxiom{\footnotesize(Ax)}
                            {\taag{\dowa}{j}{k} \vdash @_j\tup{\dowa}k}
                            }
                            {\taag{\dowa}{j}{k} \vdash @_i@_j\tup{\dowa}k}
                        }
                        {
                            \prftree[r]{\footnotesize($\inv{@\text{L}}$)}
                            {
                                \prftree[r]{\footnotesize(${\tup{\cmpr}\text{R}}$,W$*$)}
                                {
                                    \prftree[r]{\footnotesize(Ax)}
                                    {\tagg{\cmpr}{a}{b} \vdash \tagg{\cmpr}{a}{b}}
                                }
                                {@_i\tup{k{:}\alpha}a, @_i\tup{\beta}b, \tagg{\cmpr}{a}{b} \vdash @_i\tup{k{:}\alpha \cmpr \beta}}
                            }
                            {\rr{@_k\tup{\alpha}a}, @_i\tup{\beta}b, \tagg{\cmpr}{a}{b} \vdash @_i\tup{k{:}\alpha \cmpr \beta}}
                        }
                        {\taag{\dowa}{j}{k},@_k\tup{\alpha}a, @_i\tup{\beta}b, \tagg{\cmpr}{a}{b} \vdash \cc{@_i(@_j\tup{\dowa}k \land \tup{k{:}\alpha \cmpr \beta})}}
                    }
                    {
                        \prftree[r]{\footnotesize($\inv{{\to}\text{R}}$)}
                        {
                            \prfassumption
                            {\pp{\vdash_{\gentzen} @_i(@_j\tup{\dowa}k \land \tup{k{:}\alpha \cmpr \beta} \to \chi)}}
                        }
                        {\cc{@_i(@_j\tup{\dowa}k \land \tup{k{:}\alpha \cmpr \beta})} \vdash @_i \chi}
                    }
                    {\taag{\dowa}{j}{k}, @_k\tup{\alpha}a, @_i\tup{\beta}b, \tagg{\cmpr}{a}{b} \vdash @_i \chi}
                }
                {@_j\tup{\dowa\alpha}a, @_i\tup{\beta}b, \tagg{\cmpr}{a}{b} \vdash @_i \chi}
            }
            {@_i\tup{j{:}\dowa\alpha}a, @_i\tup{\beta}b, \tagg{\cmpr}{a}{b} \vdash @_i \chi}
            }
            {@_i\tup{j{:}\dowa\alpha \cmpr \beta} \vdash @_i \chi}
        }
        {\pp{\vdash_{\gentzen} @_i(\tup{j{:}\dowa\alpha \cmpr \beta} \to \chi)}}
        } 
    \end{center}
    In the derivation above,  (W$*$) indicates the simultaneous application of (WL) and (WR).
\end{proof}

\begin{theorem}[Completeness]\label{th:completeness:gentzen}
	Every valid sequent is provable.
\end{theorem}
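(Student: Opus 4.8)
The plan is to obtain completeness of~$\gentzen$ by transfer from the Hilbert-style axiomatization~$\hilbert$ of~$\hxpd$, which is sound and complete by~\cite{ArecesF21}, together with~\Cref{lemma:derivability}; this avoids building canonical models directly. Let $\Gamma \vdash \Delta$ be valid, with $\Gamma = \{\gamma_1,\dots,\gamma_m\}$ and $\Delta = \{\delta_1,\dots,\delta_k\}$, and put $\varphi_0 \coloneqq (\gamma_1 \land \dots \land \gamma_m) \to (\delta_1 \lor \dots \lor \delta_k)$, reading $\bigwedge\emptyset$ as $\top$ and $\bigvee\emptyset$ as $\bot$. Every formula admissible in a sequent---whether $@_i\psi$ or $\tagg{\cmpr}{j}{k}$---is \emph{global}: its truth at a node does not depend on the node. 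Hence $\varphi_0$ is global too, and $\vDash\varphi_0$ holds exactly when, for all~$\amodel$, $\amodel \Vdash \Gamma$ implies $\amodel \Vdash \delta$ for some $\delta \in \Delta$---that is, exactly when $\Gamma \vdash \Delta$ is valid. Thus $\varphi_0$ is a tautology; by completeness of~$\hilbert$ we get $\vdash_{\hilbert} \varphi_0$, and~\Cref{lemma:derivability} then yields a $\gentzen$-proof of $\vdash @_i\varphi_0$ for some nominal~$i$ not occurring in~$\varphi_0$.

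It remains to convert this proof of $\vdash @_i\varphi_0$ into a proof of $\Gamma \vdash \Delta$. First, peel off the propositional skeleton of~$\varphi_0$ using invertibility of the propositional rules (\Cref{th:invertible}) together with derived rules for~$\land$ on the left and~$\lor$ on the right (obtained from their right/left counterparts and~(Cut), just as in the proof of~\Cref{lemma:derivability}); this produces a proof of $@_i\gamma_1, \dots, @_i\gamma_m \vdash @_i\delta_1, \dots, @_i\delta_k$, where in the degenerate cases $@_i\top$ is removed on the left by the derived rule~($\top$L) and $@_i\bot$ on the right by a~(Cut) against the instance $@_i\bot \vdash$ of~($\bot$). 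Second, strip the outer~$@_i$ from every formula. For a formula $@_i@_j\psi$ this is immediate via~(Cut) against the two provable sequents $@_j\psi \vdash @_i@_j\psi$ and $@_i@_j\psi \vdash @_j\psi$ (each obtained by~(@R) or~(@L) over the generalized axiom~(AxG)). For a comparison formula $@_i\tagg{\cmpr}{j}{k}$ it is immediate, again via~(Cut), from the derived equivalence $@_i\tagg{\cmpr}{j}{k} \dashv\vdash_{\gentzen} \tagg{\cmpr}{j}{k}$; this equivalence follows by applying~($\tup{\cmpr}$L) and~($\tup{\cmpr}$R) with $\alpha = j{:}$ and $\beta = k{:}$, the provable sequent $\vdash_{\gentzen} @_i@_j j$ (obtained by~(@R), (@T), and~(Ax)) and its counterpart for~$k$, the symmetry of nominal equality (derivable from~(@T) and~(@5)), and the generalized forms of~$(\text{S}_3)$ and of~($\tup{\cmpr}$B)---all routine. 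The resulting derivation has end-sequent $\Gamma \vdash \Delta$ and all leaves instances of~(Ax) or~($\bot$), so $\Gamma \vdash_{\gentzen} \Delta$.

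I expect the main obstacle to be this transport step: the point is to reconcile the fact that~$\varphi_0$ is an \emph{arbitrary} node expression with the restricted shapes that sequents may contain. Concretely, the care lies in the bookkeeping needed to move the outer~$@_i$ past a bare comparison formula and in assembling the small auxiliary derivations it relies on (the generalized axiom, the generalized~$(\text{S}_3)$, symmetry of nominal equality, and $@_i\tagg{\cmpr}{j}{k} \dashv\vdash_{\gentzen} \tagg{\cmpr}{j}{k}$); since most of these are already announced after~\Cref{rules:hxpd} and exercised in the proof of~\Cref{lemma:derivability}, the remaining work is verification rather than discovery. Finally, note that this argument uses~(Cut) and weakening freely---which is allowed by the definition of provability---whereas cut-free completeness is obtained only a posteriori, via the cut-elimination theorem of~\Cref{sec:cut}.
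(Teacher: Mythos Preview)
Your proposal is correct and follows essentially the same route as the paper: reduce validity of the sequent to $\vDash\varphi_0$, invoke completeness of~$\hilbert$ to get $\vdash_{\hilbert}\varphi_0$, apply \Cref{lemma:derivability} to obtain $\vdash_{\gentzen}@_i\varphi_0$, and then transport to $\Gamma\vdash_{\gentzen}\Delta$. The paper states the two transport steps (from $\vdash_\gentzen @_i\varphi_0$ to $@_i\gamma_1,\dots\vdash_\gentzen @_i\delta_1,\dots$, and from there to $\Gamma\vdash_\gentzen\Delta$) in one line without justification; your explicit treatment via invertibility, the derived propositional rules, and the equivalence $@_i\tagg{\cmpr}{j}{k}\dashv\vdash_\gentzen\tagg{\cmpr}{j}{k}$ fills in exactly what the paper defers to~\cite{arxiv}.
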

\begin{proof}
    Suppose $\gamma_1,\ldots,\gamma_n \vdash \delta_1,\ldots,\delta_m$ is valid.
    From the completeness result in~\cite{ArecesF21}, we know there is $k$ such that $\vdash^k_{\hilbert} \bigwedge_{1 \le i \le n} \gamma_i \to \bigvee_{1 \le j \le m} \delta_j$.
    From \Cref{lemma:derivability}, we get $\vdash_{\gentzen} @_i(\bigwedge_{1 \le i \le n} \gamma_i \to \bigvee_{1 \le j \le m} \delta_j)$.
    This implies $@_i\gamma_1, \ldots, @_i\gamma_n \vdash_{\gentzen} @_i\delta_1, \ldots,@_i\delta_m$, and so $\gamma_1,\ldots,\gamma_n \vdash_{\gentzen} \delta_1,\ldots,\delta_m$ as required.
\end{proof}

\section{Cut Elimination}\label{sec:cut}

The rule (Cut) plays a crucial role in our proof of the completeness of~$\gentzen$.
Specifically, it allows us to compose translations of Hilbert-style derivations within $\gentzen$.
Despite its utility, however, (Cut) can be eliminated.
The general strategy, following~\cite{Negri14}, is to push applications of (Cut) ``upwards'' in the derivation tree until they disappear at the level of leaves.
To this end, we perform an induction on the \emph{size} of the \emph{active cut expression}, with a sub-induction on the \emph{(Cut) height}.
We introduce these notions and related results below.

\begin{definition}\label{def:size}
    The $\size$ of a node expression is defined by mutual recursion as:
    \begin{align*}
        \size(p) & = 1 &
            \size(\varphi \to \psi) & = 1 + \size(\varphi) + \size(\psi) &
                \size(\dowa) & = 1 \\
        \size(i) & = 1 &
            \size(@_i\varphi) & = 1 + \size(\varphi)  &
                \size(i{:}) & = 1 \\
        \size(\bot) & = 1 &
            \size(\tup{\dowa} \varphi) & = 1 + \size(\varphi) &
                \size(\varphi?) & = 1 + \size(\varphi) \\
        &&
            \size(\tup{\alpha \cmpr \beta}) & = 1 + \size(\alpha) + \size(\beta) &
                \size(\alpha \beta) & = \size(\alpha) + \size(\beta)
    \end{align*}
    The function $\size$ induces a well-founded partial order over the set of node expressions.
\end{definition}

\begin{definition}
    The \emph{height} of a derivation is the length of its longest branch (e.g., a derivation consisting of only an application of (Ax) has height 1).
    If $\Gamma \vdash \Delta$ is the end-sequent in a derivation, we will use $\Gamma \vdash^n \Delta$ to indicate that the derivation has height $n$.
    The \emph{cut height} of an application of (Cut) in a given derivation is the sum of the heights of the derivations of the premisses of the rule; i.e., if we have derivations $\Gamma \vdash^n \Delta, \cc{\varphi}$, and $\cc{\varphi}, \Gamma' \vdash^m \Delta'$, using (Cut), we obtain a derivation $\Gamma, \Gamma' \vdash^{(\max(n,m)+1)} \Delta, \Delta'$. In this case, the cut height is $n+m$.
    In such an application of (Cut), we call $\cc{\varphi}$ the \emph{active cut} expression.
\end{definition}

Intuitively, the \emph{cut height} measures how close to the leaves of a derivation a particular application of (Cut) occurs, taking into consideration the derivations of \emph{both} premisses of the rule.
This will be important in our proof of cut elimination. 
%
%
%
We can now state and prove the main result of this section.

\begin{theorem}\label{th:cut:elimination}
    Every use of (Cut) in the derivation of a provable sequent can be eliminated.
\end{theorem}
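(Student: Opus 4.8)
The plan is to follow the standard Gentzen-style cut-elimination strategy, adapted to the labeled/hybrid setting as in~\cite{Negri14}. We argue by a double induction: the outer induction is on $\size$ of the active cut expression $\cc{\varphi}$ (using the well-founded order from \Cref{def:size}), and the inner induction is on the cut height $n+m$. Given a derivation containing an application of (Cut), we may assume by the inner/outer induction that all cuts strictly higher up, or with a strictly smaller active cut expression, have already been removed; it then suffices to show that a single topmost (Cut) can be either eliminated or replaced by one or more cuts of strictly smaller height or strictly smaller active cut expression. We distinguish cases according to how the two premisses $\Gamma \vdash^n \Delta, \cc{\varphi}$ and $\cc{\varphi}, \Gamma' \vdash^m \Delta'$ were derived.

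\textbf{Axiom and weakening cases.} If either premiss is an instance of (Ax) or ($\bot$), the cut is eliminated outright: the conclusion $\Gamma, \Gamma' \vdash \Delta, \Delta'$ is obtained directly (as an axiom, or by weakening the non-axiom premiss). Similarly, if the cut expression $\cc{\varphi}$ is not principal in the last rule applied to one of the premisses — in particular if it was introduced by (WL) or (WR), or is merely part of the context of some other rule — we permute the cut upward past that rule, strictly decreasing the cut height; the side conditions on the rules with eigenvariable restrictions (Nom), $(\tup{\dowa}\text{L})$, $(\tup{\cmpr}\text{L})$ are respected because the fresh nominals do not occur in $\cc{\varphi}$ after the usual renaming, and the context-copying rules for nominals and comparisons ($(\text{S}_m)$, (@5), (Eq5), etc.) only add expressions and so commute freely with (Cut).

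\textbf{Principal cases.} The essential cases are those where $\cc{\varphi}$ is principal in the last rule of \emph{both} premisses. Here we have the matched pairs $(\to\text{L})/(\to\text{R})$, $(@\text{L})/(@\text{R})$, $(\tup{\dowa}\text{L})/(\tup{\dowa}\text{R})$, $(\tup{\cmpr}\text{L})/(\tup{\cmpr}\text{R})$, and $(\text{NEqL})/(\text{NEqR})$; in each case we replace the single cut on $\cc{\varphi}$ by cuts on its strictly smaller subexpressions (for $\to$ and $@$ this is the textbook reduction; for $\tup{\dowa}$ and $\tup{\cmpr}$ one cuts on the witness expressions $@_j\varphi$ and $\tagg{\cmpr}{j}{k}$, both of strictly smaller $\size$, after instantiating the eigenvariables of the left rule by the specific nominals supplied on the right). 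For atomic principal expressions — $\cc{\varphi}$ of the form $@_ip$, $@_ij$, or $\tagg{=}{i}{j}$ — a principal-principal cut cannot arise, since these have no right-introduction rule other than (Ax); such a cut is handled by the axiom case above.

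\textbf{Main obstacle.} The delicate point, and the reason the rules $(\text{S}_m)$ and the Euclidean/equivalence rules were presented in their restricted atomic forms, is the interaction of (Cut) with the \emph{substitution} and \emph{closure} rules for nominals and comparisons: these rules copy context expressions and can have the cut expression appear in a way that is ``semi-principal''. The standard move is to note that the restricted forms of $(\text{S}_m)$, (@5), (Eq5) only ever \emph{add} new atomic expressions built from nominals already in the conclusion, so permuting a cut above them never blocks on a side condition and never increases the $\size$ of the active cut; one must, however, be careful that after permutation the newly introduced atomic expression is itself not the cut expression, which is guaranteed because those expressions are atomic whereas in a principal-principal situation the cut expression would have to carry the principal connective. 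I expect verifying this permutation systematically for every $(\text{S}_m)$/closure rule — confirming the side conditions and the bookkeeping of fresh nominals — to be the most laborious and error-prone part of the argument, and the place where the design choice to keep these rules atomic genuinely pays off; the remaining cases are routine adaptations of the classical Gentzen reductions.
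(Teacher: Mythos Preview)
Your overall strategy matches the paper's: double induction on the size of the active cut and on cut height, permutation for non-principal cases, and reduction to smaller cuts for the standard L/R principal pairs. However, there is a real gap in how you handle what you call the ``semi-principal'' situations. You locate the difficulty only in the closure rules $(\text{S}_m)$, $(@5)$, (Eq5), and you claim these ``commute freely with (Cut)'' because they only \emph{add} atomic expressions to the premiss. That is not the obstruction: the problem is that these rules --- and, crucially, also $(\tup{\dowa}\text{R})$ and $(\tup{\cmpr}\text{R})$, which you do not flag at all --- \emph{require} specific formulas in the antecedent of their conclusion (the path witnesses $@_i\tup{\dowa}j$, respectively $@_i\tup{\alpha}j$ and $@_i\tup{\beta}k$). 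If the active cut is one of these required formulas, then after permuting the cut upward that formula has been removed from the antecedent and the rule can no longer be re-applied below; the naive permutation simply fails to reproduce the original conclusion.

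The paper singles out exactly this situation as the case ``particular to $\gentzen$'': a cut on $@_i\tup{\dowa}a$ where the left premiss ends in $(\tup{\dowa}\text{R})$ and the right premiss ends in $(\tup{\cmpr}\text{R})$ with $@_i\tup{\dowa}a$ as the required path witness --- an interaction of two \emph{right} rules, not an L/R pair, so it falls outside your list of principal cases. The fix is not permutation but a genuine reduction: first a cut of strictly smaller height on $@_i\tup{\dowa}a$ (against the premiss of $(\tup{\dowa}\text{R})$), then a second cut on the strictly smaller nominal witness $@_ja$, after using $(\text{S}_2)$ to \emph{reconstruct} the missing $@_i\tup{\dowa}a$ in the antecedent from $@_i\tup{\dowa}j$ and $@_ja$. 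It is this substitution-based repair --- not free commutation --- that the restricted atomic form of the $(\text{S}_m)$ rules is designed to enable. Analogous reductions are needed whenever the active cut is a required side-formula of $(\tup{\dowa}\text{R})$, $(\tup{\cmpr}\text{R})$, or one of the closure rules; your plan as written does not cover any of these.
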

\begin{proof}
    The proof is by induction on two measures: the size of the active cut expression, and the cut height.
    More precisely, in the proof we associate with each application of (Cut) in a derivation a pair $(k,h)$, called \emph{cut complexity}, where: $k$ corresponds to the size of the active cut expression, and $h$ corresponds to the cut height.
    The induction is on the lexicographic order of the pairs $(k,h)$.
    
    \medskip\noindent
    \textbf{Base Case.}
    The base cases of the induction are relatively direct.
    They involve derivations in which (Cut) is applied only once, with axiom rules applied to its premisses, i.e., the premisses of (Cut) must be instances of (Ax) or ($\bot$).
    Eliminating (Cut) in such configurations is unproblematic.
    We illustrate one representative case below.
    Suppose $\aderivation[C]$ and $\aderivation[A]$ are derivations with the following structure:
    $$\scalebox{\thescalefactor}
        {
            \prftree[r,l]{\footnotesize(Cut)}{$\aderivation[C]=$}
            {
                \prfbyaxiom{\footnotesize(Ax)}
                {\Gamma \vdash \Delta, \cc{\varphi}}
            }
            {
                \prfbyaxiom{\footnotesize($\bot$)}
                {\cc{\varphi}, \Gamma' \vdash \Delta'}
            }
            {\Gamma, \Gamma'  \vdash \Delta, \Delta'}

            \qquad
            \qquad

            \prftree[r,l]{\footnotesize($\Rho$)}{$\aderivation[A]=$}
            {}
            {\Gamma, \Gamma'  \vdash \Delta, \Delta'}
        }
    $$
    The derivation $\aderivation[C]$ represents a case in which one of the premisses of (Cut) is (Ax) and the other is ($\bot$).
    This derivation can be transformed into the cut-free derivation $\aderivation[A]$ in which: $(\Rho)$ is (Ax) if $\varphi \notin \Gamma$; and it is ($\bot$) otherwise.
    The remaining cases use a similar argument.

	\medskip\noindent
    \textbf{Inductive Step.}
    The key idea is to identify an application of (Cut) that is minimal according to the cut height, and eliminate it.
    More precisely, we proceed by considering in a derivation a sub-derivation ending in an application of (Cut) with complexity $(k,h)$ where the value for $h$ is minimal.
    This guarantees that no other instance of (Cut) appears above it in the sub-derivation.
    We show that this sub-derivation can be replaced by an alternative one which uses only (Cut) instances whose associated pairs $(k',h')$ are strictly smaller than $(k,h)$.
    We can then invoke a (strong) inductive hypothesis to claim that these (Cut) instances can also be eliminated. 
        
    We proceed by a case analysis based on the syntactic form of the active cut, and on whether the active cut is principal in either premiss of the application of (Cut).
    This analysis guides a systematic transformation of the derivation where: we push the (Cut) upwards, or replace it with applications of (Cut) involving smaller active cut expressions.
    In either case, the process eventually leads to a cut-free derivation.
    We illustrate how this transformation unfolds in some representative scenarios.

    \medskip
    \noindent
    {\it Non-principal Cases.}
    First, let us cover the case where the active cut is not principal in the right premiss of (Cut).
    In this case, the application of (Cut) is permuted up.
    To illustrate this process, consider, e.g., the derivation:
    $$\scalebox{\thescalefactor}
        {
            \prftree[r]{\footnotesize(Cut)}
            {
                \prfassumption
                {\Gamma \vdash^{n} {\Delta}, \cc{\varphi}}
            }
            {
                \prftree[r]{\footnotesize(${\tup{\cmpr}}$L)}
                {
                    \prfassumption
                    {@_i\tup{\alpha}j, @_i\tup{\beta}k, \tagg{\cmpr}{j}{k}, \varphi, {\Gamma'} \vdash^m \Delta}
                }
                {\cc{\varphi}, @_i\tup{\alpha \cmpr \beta}, {\Gamma'} \vdash \Delta}
            }
            {@_i\tup{\alpha \cmpr \beta}, \Gamma, {\Gamma'} \vdash {\Delta}, \Delta'}
        }
    $$

    \noindent
    Suppose that in this derivation, the use of (Cut) has an associated complexity  $(\size(\varphi), n + (m + 1))$, where $n+(m+1)$ is minimal.
    W.l.o.g., assume that $j$ and $k$ do not appear  in $\Gamma \vdash \Delta$.%
        \footnote{If $j$ or $k$ do appear in $\Gamma \vdash \Delta$, we can simply choose different nominals, and rewrite the derivation of ${{\varphi}, @_i\tup{\alpha \cmpr \beta}, {\Gamma'} \vdash \Delta}$ using the new selection of nominals.}
    We transform this derivation into:
    $$\scalebox{\thescalefactor}
        {
            \prftree[r]{\footnotesize(${\tup{\cmpr}}$L)}
            {
                \prftree[r]{\footnotesize(Cut)}
                {
                    \prfassumption
                    {\Gamma \vdash^{n} {\Delta}, \cc{\varphi}}
                    \quad
                }
                {
                    \prfassumption
                    {\cc{\varphi}, @_i\tup{\alpha}j, @_i\tup{\beta}k, \tagg{\cmpr}{j}{k}, {\Gamma'} \vdash^m \Delta}
                }
                {@_i\tup{\alpha}j, @_i\tup{\beta}k, \tagg{\cmpr}{j}{k}, \Gamma, {\Gamma'} \vdash {\Delta}, \Delta'}
            }
            {@_i\tup{\alpha \cmpr \beta}, \Gamma, {\Gamma'} \vdash \Delta, {\Delta'}}
        }
    $$

    \noindent
    The application of (Cut) in the transformed derivation has complexity $(\size(\varphi), n+m)$, so it can be eliminated by the inductive hypothesis.

    The remaining cases where the active cut is not principal in the right premiss follow the same strategy.
    To see why, note that all such derivations have the following general structure:
    $$\scalebox{\thescalefactor}
        {
            \prftree[r]{\footnotesize(Cut)}
            {
                \prfassumption
                {\Gamma \vdash^{n} \Delta,\cc{\varphi}}
            }
            {
                \prftree[r]{\footnotesize($\Rho$)}
                {
                    \prfassumption
                    {{\varphi},\Phi', \Gamma' \vdash^{m} \Delta',\Sigma'}
                }
                {\cc{\varphi},\Phi, \Gamma' \vdash \Delta',\Sigma}
            }
            {\Phi, \Gamma, \Gamma'  \vdash \Delta, \Delta', \Sigma}
        }
    $$
    In this derivation, we are considering ($\Rho$) is a single premiss rule of $\gentzen$,
    the set $\varphi,\Gamma',\Delta'$ is the context for the rule, and
    the set $\Phi',\Phi,\Sigma',\Sigma$ are the node expressions the rule acts upon.
    Again, we assume that (Cut) has complexity $(\size(\varphi),n+(m+1))$ where $n+(m+1)$ is minimal; i.e., where (Cut) is not used in ${\Gamma \vdash^{n} \Delta, \cc{\varphi}}$, nor in ${{\varphi},\Phi', \Gamma' \vdash^{m} \Delta',\Sigma'}$.
    Modulo a possible renaming of nominals, we transform this derivation into:
    $$\scalebox{\thescalefactor}
        {
            \prftree[r]{\footnotesize($\Rho$)}
            {
                \prftree[r]{\footnotesize(Cut)}
                {
                    \prfassumption
                    {\Gamma \vdash^{n} \Delta, \cc{\varphi}}
                }
                {
                    \prfassumption
                    {\cc{\varphi},\Phi',\Gamma' \vdash^{m} \Delta',\Sigma'}
                }
                {\Phi', \Gamma', \Gamma \vdash \Delta, \Delta', \Sigma'}
            }
            {\Phi, \Gamma', \Gamma \vdash \Delta, \Delta', \Sigma}
        }
    $$
    The use of (Cut) in the transformed derivation has complexity $(\size(\varphi), n+m)$, using the inductive hypothesis, we obtain a cut-free derivation of ${\Phi, \Gamma', \Gamma \vdash \Delta, \Delta', \Sigma}$.

    The cases where the active cut is not principal in the left premiss is symmetric. The ($\to$L) case---the only two-premise rule in $\gentzen$---is handled similarly, and is well known in the literature.
    
    \medskip
    \noindent
    {\it Principal Cases.}
    Let us now turn our attention to derivations where the active cut is principal in both premisses.
    Such cases are central to the proof and require careful handling to ensure that the application of (Cut) can still be pushed upwards, and ultimately eliminated.
    We must examine all combinations of rules with a possibly matching active cut.
    We illustrate a few representative cases below.

    Let us consider first the interaction between ($\tup{\dowa}$R) and ($\tup{\dowa}$L).
    We adapt the strategy presented in~\cite{Negri14}.
    Suppose that in a derivation we encounter an application of (Cut) of minimal height of the form:
    $$\scalebox{\thescalefactor}
        {
            \prftree[r]{\footnotesize(Cut)}
            {
                \prftree[r]{\footnotesize($\tup{\dowa}$R)}
                {
                    \prfassumption
                    {@_i\tup{\dowa}j, \Gamma \vdash^{n} \Delta, @_i\tup{\dowa}\varphi, @_j\varphi}
                }
                {@_i\tup{\dowa}j,\Gamma \vdash \Delta, \cc{@_i\tup{\dowa}\varphi}}
            }
            {
                \prftree[r]{\footnotesize($\tup{\dowa}$L)}
                {
                    \prfassumption
                    {@_i\tup{\dowa}j, @_j\varphi, \Gamma' \vdash^{m} \Delta'}
                }
                {\cc{@_i\tup{\dowa}\varphi}, \Gamma' \vdash \Delta'}
            }
            {@_i\tup{\dowa}j,\Gamma, \Gamma' \vdash \Delta,\Delta'}
        }
    $$

    \noindent
    The exhibited (Cut) has complexity $(\size(@_i\tup{\dowa}\varphi), (n+1)+(m+1))$.
    We proceed to transform the derivation into a new one that reduces the complexity and moves us closer to a cut-free derivation.
    $$\scalebox{\thescalefactor}
        {
            \prftree[r]{\footnotesize(Cut$_2$)}
            {
                \prftree[r]{\footnotesize(Cut$_1$)}
                {
                    \prfassumption
                    {@_i\tup{\dowa}j, \Gamma,\vdash^{n} \Delta,@_j\varphi, \cc{@_i\tup{\dowa}\varphi}}
                }
                {
                    \prftree[r]{\footnotesize($\tup{\dowa}$L)}
                    {
                        \prfassumption
                        {@_i\tup{\dowa}j, @_j\varphi, \Gamma' \vdash^{m} \Delta'}
                    }
                    {\cc{@_i\tup{\dowa}\varphi}, \Gamma' \vdash \Delta'}
                }
                {@_i\tup{\dowa}j, \Gamma,\Gamma' \vdash \Delta,\Delta', \cc{@_j\varphi}}
            }
            {
                \prfassumption
                {\cc{@_j\varphi},@_i\tup{\dowa}j, \Gamma' \vdash^{m} \Delta'}
            }
            {@_i\tup{\dowa}j,\Gamma, \Gamma' \vdash \Delta,\Delta'}
        }
    $$ 
    \noindent
    In the transformed derivation, there are two applications of (Cut), labeled (Cut$_1$) and (Cut$_2$).  We reason as follows, (Cut$_1$), is the only (Cut) in its subderivation and has complexity $(\size(@_i\tup{\dowa}\varphi), n+(m+1))$.
    Therefore, by the inductive hypothesis, (Cut$_1$) can be eliminated resulting in a cut-free derivation of ${@_i\tup{\dowa}j, \Gamma, \Gamma' \vdash^h \Delta, \Delta', \cc{@_j\varphi}}$, for some unknown $h$.
    We use this cut-free derivation as a building block to construct the derivation:
    $$
    \scalebox{\thescalefactor}
        {
            \prftree[r]{\footnotesize(Cut$_2$)}
            {
                \prfassumption
                {@_i\tup{\dowa}j, \Gamma,\Gamma' \vdash^h \Delta,\Delta', \cc{@_j\varphi}}
                \quad
            }
            {
                \prfassumption
                {\cc{@_j\varphi},@_i\tup{\dowa}j, \Gamma' \vdash^{m} \Delta'}
            }
            {@_i\tup{\dowa}j,\Gamma, \Gamma' \vdash \Delta,\Delta'}
        }
    $$
    
    \noindent
    We can now see that (Cut$_2$) can also be eliminated.
    It has complexity $(\size(@_j\varphi), h+m)$ and it is the only (Cut) application in the derivation.
    Since $\size(@_j\varphi) < \size(@_i\tup{\dowa}\varphi)$, we can apply the inductive hypothesis, and obtain a fully cut-free derivation of ${@_i\tup{\dowa}j,\Gamma, \Gamma' \vdash \Delta,\Delta'}$.
    
    Let us now consider a case involving data comparisons: the interaction between ($\tup{\cmpr}$R) and ($\tup{\cmpr}$L).
    Namely, suppose that in a derivation we encounter an application of (Cut) of minimal height of the form:
    $$\scalebox{\thescalefactor}
        {                   
            \prftree[r]{\footnotesize(Cut)}
            {
                \prftree[r]{\footnotesize(${\tup{\cmpr}}$R)}
                {
                    \prfassumption
                    {@_i\tup{\alpha}j, @_i\tup{\beta}k, \Gamma \vdash^{n} \Delta, @_i\tup{\alpha \cmpr \beta},\tagg{\cmpr}{j}{k}}
                }
                {@_i\tup{\alpha}j, @_i\tup{\beta}k, \Gamma \vdash \Delta, \cc{@_i\tup{\alpha \cmpr \beta}}}
            }
            {
                \prftree[r]{\footnotesize(${\tup{\cmpr}}$L)}
                {@_i\tup{\alpha}j, @_i\tup{\beta}k, \tagg{\cmpr}{j}{k}, \Gamma' \vdash^m \Delta'}
                {\cc{@_i\tup{\alpha \cmpr \beta}}, \Gamma' \vdash \Delta'}
            }
            {@_i\tup{\alpha}j, @_i\tup{\beta}k, \Gamma, \Gamma' \vdash \Delta, \Delta'}
        }
    $$
    
    \noindent
    We transform this derivation into:
    $$\scalebox{\thescalefactor}
        {
            \prftree[r]{\footnotesize(Cut$_2$)}
            {
                \prftree[r]{\footnotesize(Cut$_1$)}
                {
                    \prfassumption
                    {@_i\tup{\alpha}j, @_i\tup{\beta}k, \Gamma \vdash^{n} \Delta, \tagg{\cmpr}{j}{k}, \cc{@_i\tup{\alpha \cmpr \beta}}}
                }
                {
                    \hspace*{-.3cm}
                    \prftree[r]{\footnotesize(${\tup{\cmpr}}$L)}
                    {@_i\tup{\alpha}j, @_i\tup{\beta}k, \tagg{\cmpr}{j}{k}, \Gamma' \vdash^m \Delta'}
                    {\cc{@_i\tup{\alpha \cmpr \beta}}, \Gamma' \vdash \Delta'}
                }
                {@_i\tup{\alpha}j, @_i\tup{\beta}k, \Gamma, \Gamma' \vdash \Delta, \Delta', \cc{\tagg{\cmpr}{j}{k}}}
            }
            {
                \hspace*{-1.3cm}
                \prfassumption
                {\cc{\tagg{\cmpr}{j}{k}}, @_i\tup{\alpha}j, @_i\tup{\beta}j, \Gamma' \vdash^m \Delta'}
            }
            {@_i\tup{\alpha}j, @_i\tup{\beta}k, \Gamma, \Gamma' \vdash \Delta, \Delta'}
        }
    $$
    
    \noindent
    The argument is similar, the original use of (Cut) has complexity $(\size(@_i\tup{\alpha \cmpr \beta}), (n+1)+(m+1))$.
    When we transform the derivation, we introduce two new applications of (Cut), labeled (Cut$_1$) and (Cut$_2$).
    (Cut$_1$) has complexity $(\size(@_i\tup{\alpha \cmpr \beta}), n+(m+1))$ and can be eliminated.  Then, (Cut$_2$), can be eliminated from the resulting derivation as it involves a simpler active cut expression.

    For the particular to $\gentzen$ case, let us consider the interaction between the rules ($\tup{\dowa}$R) and ($\tup{\cmpr}$R).
    Namely, suppose that in a derivation we encounter an application of (Cut) of minimal height of the form:
    $$\scalebox{\thescalefactor}
        {
            \prftree[r]{\footnotesize(Cut)}
            {
                \prftree[r]{\footnotesize(${\tup{\dowa}}$R)}
                {
                    \prfassumption
                    {@_i\tup{\dowa}j, \Gamma \vdash^{n} \Delta, @_i\tup{\dowa}a, @_ja}
                }
                {@_i\tup{\dowa}j, \Gamma \vdash \Delta, \cc{@_i\tup{\dowa}a}}
                \hspace{-2pt}
            }
            {
                \prftree[r]{\footnotesize(${\tup{\cmpr}}$R)}
                {
                    \prfassumption
                    {@_i\tup{\dowa}a, @_i\tup{\beta}b, \Gamma' \vdash^m \Delta', @_i\tup{\dowa \cmpr \beta},\tagg{\cmpr}{a}{b}}
                }
                {\cc{@_i\tup{\dowa}a}, @_i\tup{\beta}b, \Gamma' \vdash \Delta', @_i\tup{\dowa \cmpr \beta}}
            }
            {@_i\tup{\dowa}j, @_i\tup{\beta}b, \Gamma, \Gamma' \vdash \Delta, \Delta', @_i\tup{\dowa \cmpr \beta}}
        }
    $$

    \noindent
    Unlike the other cut-elimination cases we have examined, this particular case involves two right rules applied in the premisses of (Cut), rather than a mix of left and right rules.
    This synchronization of right rules is unusual, but it does not pose a problem.
    We can still transform the derivation to eliminate (Cut).

    \begin{flushleft}
        ~~
        \scalebox{\thescalefactor}
        {
                $\aderivation = {@_i\tup{\dowa}j, \Gamma \vdash^{n} \Delta, @_ja, \cc{@_i\tup{\dowa}a}}$
        }
    \end{flushleft}
    \vspace{-1cm}
    \begin{center}\scalebox{\thescalefactor}
        {
            \prftree[r]{\footnotesize(Cut$_2$)}
            {
                \prftree[r]{\footnotesize(Cut$_1$)}
                {
                    ~~
                    \prfassumption
                    {\aderivation}
                    \qquad
                }
                {
                    \prftree[r]{\footnotesize(${\tup{\cmpr}}$R)}
                    {
                        \prfassumption
                        {@_i\tup{\dowa}a, @_i\tup{\beta}b, \Gamma' \vdash^m \Delta', @_i\tup{\dowa \cmpr \beta},\tagg{\cmpr}{a}{b}}
                    }
                    {\cc{@_i\tup{\dowa}a}, @_i\tup{\beta}b, \Gamma' \vdash \Delta', @_i\tup{\dowa \cmpr \beta}}
                }
                {@_i\tup{\dowa}j, @_i\tup{\beta}b, \Gamma, \Gamma' \vdash \Delta, \Delta', @_i\tup{\dowa \cmpr \beta}, \cc{@_ja}}
            }
            {
                \prftree[r]{\footnotesize($\text{S}_2$)}
                {
                    \prftree[r]{\footnotesize(WL)}
                    {
                        \prftree[r]{\footnotesize(WL)}
                        {
                            \prftree[r]{\footnotesize(${\tup{\cmpr}}$R)}
                            {
                                \prfassumption
                                {@_i\tup{\dowa}a, @_i\tup{\beta}b, \Gamma' \vdash^m \Delta', @_i\tup{\dowa \cmpr \beta},\tagg{\cmpr}{a}{b}}
                            }
                            {{@_i\tup{\dowa}a}, @_i\tup{\beta}b, \Gamma' \vdash \Delta', @_i\tup{\dowa \cmpr \beta}}
                        }
                        {@_i\tup{\dowa}j, @_i\tup{\dowa}a, @_i\tup{\beta}b, \Gamma' \vdash \Delta', @_i\tup{\dowa \cmpr \beta}}
                    }
                    {@_ja, @_i\tup{\dowa}j, @_i\tup{\dowa}a, @_i\tup{\beta}b, \Gamma' \vdash \Delta', @_i\tup{\dowa \cmpr \beta}}
                }
                {\cc{@_ja}, @_i\tup{\dowa}j, @_i\tup{\beta}b, \Gamma' \vdash \Delta', @_i\tup{\dowa \cmpr \beta}}
            }
            {@_i\tup{\dowa}j, @_i\tup{\beta}b, \Gamma, \Gamma' \vdash \Delta, \Delta', @_i\tup{\dowa \cmpr \beta}}
        }
    \end{center}

    \noindent
    Just as in earlier cases, the original (Cut) is replaced by two new applications: (Cut$_1$), which has strictly smaller cut height, and (Cut$_2$), which involves a simpler active cut.
    Both of these are less complex than the original in the lexicographic sense, so we can apply the inductive hypothesis to eliminate them.
    This shows that even when both rules in the (Cut) are right rules, the same general strategy still works.
    The remaining cases of interaction of rules in cut-elimination can be found in \cite{arxiv}.
\end{proof}

\section{A Sequent System for Hybrid Logic}
\label{sec:properties} \label{sec:torben}


It is helpful to view $\hxpd$ as a modular extension of the Basic Hybrid Logic $\hylo(@)$ with additional constructs for data comparison.
This naturally raises the question: can we recover a sequent calculus $\rgentzen$ for $\hylo(@)$ by simply removing the rules related to data comparisons from $\gentzen$? In this section, we make this correspondence precise by showing that the resulting fragment of $\gentzen$ faithfully simulates the sequent calculus for $\hylo(@)$ presented in~\cite{Torben2011}.


\begin{definition}
    A \emph{formula of $\hylo(@)$} is a node expression without data comparisons, and 
    where diamonds are only of the form $\tup{\dowa}\varphi$.
    A \emph{hybrid model for $\hylo(@)$} is obtained by dropping the $\{\approx_{\compc}\}_{\compc \in \cmp}$ component from the hybrid data models in \Cref{def:models}.
    The semantics of formulas on hybrid models is as in \Cref{def:semantics}.
%
    Let $\rgentzen$ be obtained from $\gentzen$ by restricting sequents to formulas in $\hylo(@)$ of the form $@_i\varphi$, and dropping all rules involving data comparisons.
\end{definition}




The calculus $\rgentzen$ simulates the sequent calculus for $\hylo(@)$ in~\cite{Torben2011}, which we refer to as $\tgentzen$


\begin{lemma}\label{lemma:completeness:hylo}
    Every provable sequent in $\tgentzen$ is also provable in $\rgentzen$.
\end{lemma}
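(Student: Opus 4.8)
The plan is to proceed by induction on the structure of a $\tgentzen$-derivation, showing that each rule of $\tgentzen$ (from~\cite{Torben2011}) is either literally a rule of $\rgentzen$ or is admissible/derivable in $\rgentzen$. Concretely, for every inference rule $(\arule)$ of $\tgentzen$ with conclusion $\asequent$ and premisses $\asequent_1,\dots,\asequent_n$, I would exhibit a derivation in $\rgentzen$ of $\asequent$ whose leaves are either axioms of $\rgentzen$ or among $\asequent_1,\dots,\asequent_n$ — i.e.\ $(\arule)$ is a derived rule of $\rgentzen$ in the sense of~\Cref{def:derivation}. Since $\tgentzen$ and $\rgentzen$ manipulate essentially the same objects (sequents over $\hylo(@)$-formulas of the form $@_i\varphi$ — recall the translation where Torben's satisfaction statements $i\!:\!\varphi$ correspond to our $@_i\varphi$), the bulk of the work is a rule-by-rule dictionary.

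First I would set up the correspondence at the level of syntax: a $\tgentzen$-sequent and an $\rgentzen$-sequent denote the same set of labelled formulas once Torben's ``$i{:}\varphi$'' notation is read as ``$@_i\varphi$''; the abbreviations in~\Cref{def:lang} ($\lnot$, $\land$, $\lor$, etc.) together with the derived propositional rules already exhibited in the proof of~\Cref{lemma:derivability} (namely $(\land\text{L})$, $(\land\text{R})$, $(\liff\text{R})$, $(\top\text{L})$, and the generalized axiom $(\text{AxG})$) ensure that $\rgentzen$ handles the full Boolean signature of $\tgentzen$, not merely $\{\to,\bot\}$. Then I would walk through the rule groups. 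The propositional rules ($\to$L, $\to$R, axioms) match directly. The nominal rules of $\tgentzen$ — reflexivity, the ``nominal substitution''/congruence rules, and the rule introducing a fresh name — are covered by $(@\text{T})$, $(@5)$, $(\text{S}_1)$, $(@\text{L})$, $(@\text{R})$ and $(\text{Nom})$; here one may need the \emph{generalized} forms of $(\text{S}_1)$ (unrestricted $\varphi$) mentioned in the text as derivable, and possibly a derived symmetry-of-$@_ij$ rule, obtained from $(@5)$ and $(@\text{T})$. The modal rules $\tup{\dowa}\text{L}$, $\tup{\dowa}\text{R}$ match $(\tup{\dowa}\text{L})$, $(\tup{\dowa}\text{R})$ of $\rgentzen$ directly, and the ``bridge''/$(\text{S}_2)$-style rule relating $@_i\tup{\dowa}j$ to the accessibility structure corresponds to $(\text{S}_2)$. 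Whenever $\tgentzen$ has a rule in a slightly different but interderivable shape — e.g.\ a combined rule, or one where our version carries an extra principal formula in the context like the ``$@_i\tup{\dowa}j$'' that $(\tup{\dowa}\text{R})$ keeps — I would patch the gap using $(\text{WL})$, $(\text{WR})$, $(\text{Cut})$, and the invertibility of the relevant rules (\Cref{th:invertible}), exactly in the style of the derivations already displayed for~\Cref{lemma:derivability}.

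The main obstacle, I expect, is not any single rule but the bookkeeping around \emph{freshness side conditions and the exact format of Torben's rules}: $\tgentzen$ may package nominal-introduction, the $@$-self rule, and the diamond rule differently from $\gentzen$, and some of $\tgentzen$'s rules are stated with principal formulas that our calculus only derives in generalized form. So the real content is (i) checking that each generalized/derived rule invoked ($(\text{AxG})$, unrestricted $(\text{S}_1)$, symmetry of $@_ij$, the $\land$/$\liff$ rules) is genuinely available in the \emph{data-free} fragment $\rgentzen$ — which it is, since none of those derivations used data-comparison rules — and (ii) for each $\tgentzen$ rule with a fresh-variable proviso, verifying that the simulating $\rgentzen$-derivation respects the freshness conditions of $(\text{Nom})$ and $(\tup{\dowa}\text{L})$ (``$j$ not in the conclusion''), which follows because the witness nominal introduced by $\tgentzen$ can be reused. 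Having dispatched every rule, the induction closes: a $\tgentzen$-proof is rebuilt bottom-up into an $\rgentzen$-proof by replacing each $\tgentzen$-inference with its simulating $\rgentzen$-subderivation, and axioms of $\tgentzen$ go to instances of $(\text{Ax})$/$(\text{AxG})$/$(\bot)$ in $\rgentzen$.
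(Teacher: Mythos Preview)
Your proposal is correct and follows essentially the same approach as the paper: show that every rule of $\tgentzen$ is a derived rule of $\rgentzen$, so that any $\tgentzen$-proof can be rebuilt bottom-up in $\rgentzen$. The only small mismatch is that $\tgentzen$ takes the \emph{box} modality as primitive (rules $([\dowa]\text{L})$, $([\dowa]\text{R})$) and uses multi-premiss nominal rules $(\text{Nom}_1)$, $(\text{Nom}_2)$ rather than the diamond and $(\text{S}_m)$ forms, so those rules do not ``match directly'' but must be derived via (Cut), (WL) and $(\text{S}_1)$/$(\text{S}_2)$---exactly the kind of patching you anticipated.
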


\begin{proof}
    We establish this result by showing that every rule in $\tgentzen$ is a derived rule~in~$\rgentzen$.
    The axioms in $\tgentzen$ align exactly with those in $\rgentzen$, establishing a one-to-one correspondence. Additionally, ($\to$L), ($\to$R), ($@$L), and ($@$R) are present in both calculi.
	The rule (Ref) from $\tgentzen$ is also present in $\rgentzen$ under the name ($@$T).
    The rules ($\land$L) and ($\land$R) are included as basic rules in $\tgentzen$, in contrast, they are derived in $\rgentzen$.
	Similarly, the rules ($[\dowa]$L) and ($[\dowa]$R), and (Nom$_1$) and (Nom$_2$), listed below, which are primitive in $\tgentzen$, are also derivable in $\rgentzen$.
	This shows that every derivation in $\tgentzen$ can be simulated in $\rgentzen$, completing the correspondence between the two systems.
	Details can be found in~\cite{arxiv}.

	\begin{center}\scalebox{\thescalefactor}{
		\begin{tabular}{@{}c@{}}
			\prftree[r]{\small$(\mathrm{Nom}_1)$}
			{\Gamma \vdash \Delta, @_i j}
			{\Gamma \vdash \Delta, @_i \varphi}
			{\Gamma \vdash \Delta, @_j \varphi}
			
			\quad
			
			\prftree[r]{\small$(\mathrm{Nom}_2)$}
			{\Gamma \vdash \Delta, @_ij}
			{\Gamma \vdash \Delta, \taag{\dowa}{i}{k}}
			{\taag{\dowa}{j}{k}, \Gamma \vdash \Delta}
			{\Gamma \vdash \Delta}
			
			\tabularnewline[10pt]
			
			\prftree[r]{\small$({[\dowa]}\mathrm{L}_1)$}
			{\Gamma \vdash \Delta, @_i\<\dowa\>j}{@_j\varphi, \Gamma \vdash \Delta}
			{@_i [\dowa]\varphi, \Gamma \vdash \Delta}
			
			\quad
			
			\prftree[r]{\small$({[\dowa]}\mathrm{R})$ $j$ is new.}
			{@_i\<\dowa\>j, \Gamma \vdash \Delta,@_j\varphi}
			{\Gamma \vdash \Delta, @_i [\dowa]\varphi}
		\end{tabular}
	}
\end{center}

As a concrete illustration of the correspondence between $\tgentzen$ and $\gentzen$, we show how the rule Nom$_2$ from $\tgentzen$ can be obtained as a derived rule within $\gentzen$.

$$\scalebox{\thescalefactor}
	{
		\prftree[r]{\footnotesize(Cut)}
		{\Gamma \vdash \Delta, \cc{@_ij}}
		{
			\prftree[r]{\footnotesize(Cut)}
			{
				\prftree[r]{\footnotesize(WL)}
				{\Gamma \vdash \Delta,@_i\tup{\dowa}k}
				{@_ij,\Gamma \vdash \Delta, \cc{@_i\tup{\dowa}k}}
			}
			{
				\prftree[r]{\footnotesize(S$_1$)}
				{
					\prftree[r]{\footnotesize(WL)}
					{@_j\tup{\dowa}k,\Gamma \vdash \Delta}
					{@_j\tup{\dowa}k, @_ij, @_i\tup{\dowa}k, \Gamma \vdash \Delta}
				}
				{\cc{@_i\tup{\dowa}k},@_ij, \Gamma \vdash \Delta}
			}
			{\cc{@_ij},\Gamma \vdash \Delta}
		}
		{\Gamma \vdash \Delta}
	}
$$

\noindent
We can see from the derivation how (Nom$_1$) is directly related to (S$_1$) in $\gentzen$.
\end{proof}


\begin{theorem}[Soundness and Completeness]\label{th:completeness:hylo}
    A sequent in $\hylo(@)$ is valid in all hybrid models iff it is provable in~$\rgentzen$.
\end{theorem}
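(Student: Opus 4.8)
\emph{The plan} is to split the biconditional and discharge each direction with results already in place. For the right-to-left direction (\textbf{soundness}), observe that $\rgentzen$ is a subsystem of $\gentzen$: every rule of $\rgentzen$ is a rule of $\gentzen$ restricted to $\hylo(@)$-formulas of the shape $@_i\varphi$, so every $\rgentzen$-derivation is in particular a $\gentzen$-derivation. By \Cref{th:soundness}, its end-sequent is then valid over all hybrid \emph{data} models, and it only remains to descend to hybrid models. This is the single genuinely semantic step: given a hybrid model $\amodel$ as defined in this section, extend it to a hybrid data model $\amodel'$ by adjoining an arbitrary family $\{\deq_{\compc}\}_{\compc\in\cmp}$ of equivalence relations on $\nodes$ (say, the identity). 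Since $\hylo(@)$-formulas contain no data comparisons, a straightforward induction on node expressions gives $\amodel',\node\Vdash\varphi$ iff $\amodel,\node\Vdash\varphi$; hence a sequent over $\hylo(@)$-formulas is valid over all hybrid data models precisely when it is valid over all hybrid models. Composing the two observations yields soundness of $\rgentzen$.

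For the left-to-right direction (\textbf{completeness}), the route is to factor through Braüner's calculus $\tgentzen$. First, import from \cite{Torben2011} the completeness of $\tgentzen$ for $\hylo(@)$: every sequent valid in all hybrid models is provable in $\tgentzen$. If the completeness theorem of \cite{Torben2011} is phrased only for single-conclusion sequents $\vdash @_i\varphi$, we first reduce the validity of an arbitrary sequent to that of a single $\hylo(@)$-formula built with $@$, $\land$, and $\lor$, exactly as in the proof of \Cref{th:completeness:gentzen}. Then we apply \Cref{lemma:completeness:hylo}, which states that everything provable in $\tgentzen$ is provable in $\rgentzen$. Chaining the two implications gives completeness of $\rgentzen$, and the theorem follows.

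The \textbf{main obstacle} is administrative rather than mathematical: aligning the sequent format of $\tgentzen$ as presented in \cite{Torben2011} with the one used here (finite sets of $@$-prefixed formulas on each side), and checking that the class of models for which \cite{Torben2011} proves completeness coincides with our hybrid models. Once this matching is in place, \Cref{lemma:completeness:hylo} carries all the proof-theoretic weight. We prefer this route over the seemingly more direct one of invoking completeness of $\gentzen$ (\Cref{th:completeness:gentzen}) and then arguing that the data-comparison rules are dispensable for pure $\hylo(@)$ sequents: the latter would require a conservativity or subformula-property argument over cut-free $\gentzen$-derivations, whereas factoring through $\tgentzen$ reuses \Cref{lemma:completeness:hylo} directly and keeps the proof short.
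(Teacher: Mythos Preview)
Your proposal is correct and is precisely the argument the paper intends: the theorem is stated immediately after \Cref{lemma:completeness:hylo} with no proof given in the text (details are deferred to the companion report), and the surrounding setup makes clear that soundness is inherited from $\gentzen$ via \Cref{th:soundness} together with the trivial expansion of hybrid models to hybrid data models, while completeness is obtained by chaining the known completeness of $\tgentzen$ from \cite{Torben2011} with \Cref{lemma:completeness:hylo}. Your remark that the alternative route through \Cref{th:completeness:gentzen} would require an additional conservativity argument is also well taken and explains why the paper introduces \Cref{lemma:completeness:hylo} in the first place.
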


In conclusion, the system $\tgentzen$ in~\cite{Torben2011} is derived from a natural deduction system for $\hylo(@)$ presented in the same work.
To reflect the structure and constraints of such a natural deduction system, $\tgentzen$ does not include weakening rules and, more importantly, is designed to be cut-free.
This comes with a trade-off: the rules (Nom$_1$), (Nom$_2$), and ($[\dowa]$L) in $\tgentzen$ are not invertible in $\tgentzen$.
In contrast, as a subsystem of $\gentzen$, $\rgentzen$ is fully invertible and enjoys cut elimination.
We bring attention also to the fact that, while (Cut) is admissible in $\tgentzen$, its use in a derivation cannot be systematically removed.
Thus, by supporting both admissibility and effective elimination of (Cut), $\rgentzen$ offers a more robust and analytically tractable framework for $\hylo(@)$.

\section{Final Comments}
\label{sec:final}

In this article, we introduce the Gentzen-style sequent calculus $\gentzen$ for the logic  Hybrid XPath with Data ($\hxpd$). $\hxpd$ is a hybrid modal logic
that captures not only the navigational core of XPath but also data comparisons, node labels (keys),
and key-based navigation operators.
In $\gentzen$, we provide rules for handling complex path expressions (composition and tests), as well as mechanisms for data comparisons (equality and inequality), nominals, and satisfiability operators.

We establish the completeness of $\gentzen$ by leveraging the completeness of the Hilbert-style calculus $\hilbert$ for $\hxpd$ introduced in~\cite{ArecesF21}.
We also show that every rule in $\gentzen$ is invertible, a crucial property in proof search. The system $\gentzen$ includes weakening rules (WL) and (WR), and the (Cut) rule.  (WL) and (WR) are never needed in derivations (they are only used to simplify sequents and eliminate irrelevant formulas at a given point in a proof).  Moreover, we showed that
(Cut) is also irrelevant for completeness, and we provide a cut elimination algorithm. 
Finally, we discuss how to define a subcalculus of $\gentzen$, which is a sound and complete sequent calculus for the Basic Hybrid Logic $\hylo(@)$, inheriting rule invertibility and cut elimination. This system improves on previously known sequent calculi for $\hylo(@)$ (see~\cite{Torben2011}).

Similarly to the approach in~\cite{Torben2011}, we can extend $\gentzen$ by incorporating pure axioms and existential saturation rules. A general result can be proved, showing that these extended calculi are sound and complete with respect to a large family of frame classes (see \cite{arxiv} for further details). 
There is a cost though: rule invertibility and cut-elimination are not guaranteed in these extensions.
Future work will explore whether these properties can be preserved under certain reasonable conditions.

Many aspects of XPath have been extensively investigated from a logical perspective, but their proof-theoretical properties have remained largely unexplored.
To our knowledge, the only exception is the sequent calculus for DataGL from~\cite{Baelde2016}.
In brief, DataGL can be seen as a formalization of a highly restricted fragment of XPath in which data comparisons are allowed only between the evaluation point and its strict descendants in a data tree.  In comparison, $\gentzen$ provides much more expressivity. 

The proof theory of modal logics has always been challenging.
One reason for this is that modal languages combine a simple propositional syntax with a rich relational semantics.
As a result, proof-theoretical approaches that rely solely on the \emph{form} of logical expressions struggle to capture their full expressive power.
Various proof-theoretical techniques—--such as display calculi, hypersequents, deep inference, labeled systems, and hybridization—--have been explored to establish a well-rounded proof theory for modal logics.
Data-aware modal logics extend the expressive power of traditional modal languages even further, incorporating mechanisms to handle pairs of complex paths in graphs and data comparisons.
Our results indicate that hybridization techniques (i.e., internalizing the use of labels via nominals and satisfiability operators) can provide the basis for a solid proof theory for modal logics with data-comparison operators.


There are several open lines for future research. 
In particular, we would like to prove the termination of our calculus and define an optimal proof search strategy.
Moreover, we wish to extend the calculus to more expressive fragments---such as those involving sibling relations, transitive closure navigation, or novel forms of data comparison, and to study proof theoretical aspects of an intuitionistic variant of $\hxpd$ (see~\cite{ACDF23}).

\bibliographystyle{eptcs}
\bibliography{biblio}




\end{document}